\newtheorem{theorem}{Theorem}
\newtheorem{problem}{Problem}
\newtheorem{definition}{Definition}
\newtheorem{lemma}{Lemma}
\newtheorem{assumption}{Assumption}
\newtheorem{proposition}{Proposition}
\begin{document}
%
\title{\textbf{Reachability Analysis and Safety Verification for Neural Network Control Systems}}

\author{Weiming~Xiang \footnotemark[1]~~and~~Taylor T. Johnson \footnotemark[1]}

\renewcommand{\thefootnote}{\fnsymbol{footnote}}
	
\footnotetext[1]{Authors are with the Department of Electrical Engineering and Computer Science, Vanderbilt University, Nashville, Tennessee 		37212, USA. Email: xiangwming@gmail.com (Weiming Xiang);  taylor.johnson@gmail.com (Taylor T. Johnson).}

\maketitle

\begin{abstract}
\boldmath
Autonomous cyber-physical systems (CPS) rely on the correct operation of numerous components, with state-of-the-art methods relying on machine learning (ML) and artificial intelligence (AI) components in various stages of sensing and control. This paper develops methods for estimating the reachable set and verifying safety properties of dynamical systems under control of neural network-based controllers that may be implemented in embedded software. The neural network controllers we consider are feedforward neural networks called multilayer perceptrons (MLP) with general activation functions. As such feedforward networks are memoryless, they may be abstractly represented as mathematical functions, and the reachability analysis of the network amounts to range (image) estimation of this function provided a set of inputs. By discretizing the input set of the MLP into a finite number of hyper-rectangular cells, our approach develops a linear programming (LP) based algorithm for over-approximating the output set of the MLP with its input set as a union of hyper-rectangular cells. Combining the over-approximation for the output set of an MLP based controller and reachable set computation routines for ordinary difference/differential equation (ODE) models, an algorithm is developed to estimate the reachable set of the closed-loop system. Finally, safety verification for neural network control systems can be performed by checking the existence of intersections between the estimated reachable set and unsafe regions. The approach is implemented in a computational software prototype and evaluated on numerical examples.

\end{abstract}

\section{Introduction}
In recent decades, there have been considerable research activities in the use of neural networks for control of complex systems such as stabilizing neural network controllers \cite{wu2014exponential,yu1998stable} and adaptive neural network controllers \cite{ge1999adaptive,hunt1992neural}. More recently, neural networks have been deployed in high-assurance systems such as self-driving vehicles \cite{bojarski2016end}, autonomous systems \cite{julian2017neural}, and aircraft collision avoidance \cite{julian2016policy}. Neural network based controllers have been demonstrated to be effective at controlling complex systems.

However, such controllers are confined to systems that comply with the lowest levels of safety integrity, since the majority of neural networks are viewed as black boxes lacking effective methods to predict all outputs and assure safety specifications for closed-loop systems. In a variety of applications to feedback control systems, there are safety-oriented restrictions that the system states are not allowed to violate while under the control of neural network based feedback controllers. Unfortunately, it has been observed that  even well-trained neural networks are sometimes sensitive to input perturbations and might react in unexpected and incorrect ways to even slight perturbations of their inputs \cite{szegedy2013intriguing}, thus it could result in unsafe closed-loop systems. With the progress of adversarial machine learning, such matters may only become worse in safety-critical cyber-physical systems (CPS). Hence, methods that are able to provide formal guarantees are in great demand for verifying specifications or properties of systems involving neural network components, especially as such AI/ML components are integrated in safety-critical CPS. The verification of neural networks is a computationally hard problem. Even verifications of simple properties concerning neural networks have been demonstrated to be non-deterministic  polynomial (NP) complete problems as reported in \cite{katz2017reluplex}.

A few results have been reported in the literature for verifying specifications in systems consisting of neural networks. In \cite{huang2017safety} satisfiability modulo theories (SMT) solvers are developed and utilized for the verification of feed-forward multi-layer neural networks. In \cite{pulina2010abstraction,pulina2012challenging} an abstraction-refinement approach is developed for computing output reachable set of neural networks. A simulation-based approach is developed in \cite{xiang2017output}, which turns the problem of over-approximating the output set of a neural network into a problem of computing the maximal sensitivity of the neural network, which is formulated in terms of a sequence of convex optimization problems. In particular, for a class of neural networks with a specific type of activation functions called rectified linear unit (ReLU), several methods are developed such as mixed-integer linear programming (MILP) based approach \cite{lomuscio2017an_arxiv,dutta2017output,dutta2018output}, linear programming (LP) based approach \cite{ehlers2017formal}, Reluplex algorithm stemmed from Simplex algorithm \cite{katz2017reluplex}, and polytope manipulation based approach \cite{xiang2017reachable_arxiv}. 

In this paper, we study the problems of reachable set estimation and safety verification for dynamical systems equipped with neural network controllers, where the plant is represented in its typical modeling formalism of time-sampled ordinary differential equations (ODEs) and the controller is a feedforward neural network determining actuation. The neural network controller considered in this paper is in the form of feedforward neural networks. The key step to solve this challenging problem is, as we shall see, to soundly estimate the output set of a feedforward neural network. By using a finite number of hyper-rectangular sets to over-approximate the input set, the output set estimation problem can be transformed into a linear programming (LP) problem for neural networks with most of classes of activation functions. Then, as the estimated output set of the neural network controller is the input set to the plant described by an ordinary difference/differential equation (ODE), the reachable set estimation for the closed-loop system can be obtained by employing sophisticated reachability analysis methods for ODE models. 

\subsection{Related Work}   
Though the importance of methods of formal guarantees for neural networks has been well-recognized in literature and there exist several results for verification of feedforward neural networks, especially for ReLU neural networks, only few have been demonstrated to be applicable for dynamical systems that involve neural network components. 

Dutta et al. proposed an MILP based verification engine called Sherlock that performs an output range analysis of ReLU feedforward neural networks in \cite{dutta2017output,dutta2018output}, in which a combined local and global search is developed to more efficiently solve MILP. Also for ReLU  neural networks, the computation of the exact output set of an ReLU neural network is formulated in terms of manipulations of polytopes in \cite{xiang2017reachable_arxiv}. In \cite{xiang2017output}, a concept called maximal sensitivity is introduced to compute the reachable set estimation for feedforward neural networks with a broader class of activation functions satisfying a monotonic assumption. These results are able to produce an estimated/exact output set of a feedforward neural network, and it therefore implies the availability for reachable set estimation and safety verification of neural network control systems. For example, a recent result for piecewise linear systems with ReLU neural network feedback controllers can be found in \cite{xiang2018reachable_acc}, in which the reachability analysis is carried out based on the result of \cite{xiang2017reachable_arxiv}. It has to be mentioned that, as far as we know and as shown in the well-known survey \cite{hunt1992neural}, neural network controllers for complex nonlinear dynamical systems are often designed with activation functions in the forms of tanh, logistic, Gaussian functions other than ReLU that most of the recent neural network verification approaches target. To the best of our knowledge, there is so far no result for reachability and verification of these common classes of neural network control systems, which we present in this paper. Additionally, to the best of our knowledge, there also are no methods for performing reachability analysis and safety verification of closed-loop control systems and CPS with neural network components, which we demonstrate in this paper through alternating computations of neural network reachability and plant reachability.

\subsection{Contributions}
In this paper we develop a novel algorithm to compute the output reachable set of feedforward neural networks with general activation functions. The algorithm is formulated in terms of LP problems.  By incorporating the reachability result for neural networks with reachable set computations for ODE models, we develop a novel algorithm to over-approximate the reachable set of a given neural network control system over a finite time horizon. We present a safety verification algorithm to provide formal safe guarantees for neural network control systems, and demonstrate the methods on an example system using a software prototype we have developed. All of these problems, to the best of our knowledge, have not been previously addressed, and are crucial to solve to enhance assurance of safety in autonomous CPS.

\section{System Description}
This section presents the model of neural network control systems. The neural network controllers considered in this paper are feedforward neural networks with their activation functions in a general form, and plant models are considered in the description of ODEs. Both discrete-time and continuous-time models are presented.
\subsection{Feedforward Neural Networks}

A neural network consists of a number of interconnected neurons. Each neuron is a simple processing element that responds to the weighted inputs it received from other neurons. In this paper, we consider the most popular and general feed-forward neural network, multi-layer perceptrons (MLP). Generally, an MLP consists of three typical classes of layers: An input layer, that serves to pass the input vector to the network, hidden layers of computation neurons, and an output layer composed of at least a computation neuron to generate the output vector.

The action of a neuron depends on its activation function, which is described as 
\begin{align}
u_i = \phi\left(\sum\nolimits_{j=1}^{n}\omega_{ij} \eta_j + \theta_i\right)
\end{align}
where $\eta_j$ is the $j$th input of the $i$th neuron, $\omega_{ij}$ is the weight from the $j$th input to the $i$th neuron, $\theta_i$ is called the bias of the $i$th neuron, $u_i$ is the output of the $i$th neuron, $\phi(\cdot)$ is the activation function. The activation function is generally a nonlinear continuous function  describing the reaction of $i$th neuron with inputs $\eta_j$, $j=1,\cdots,n$. Typical activation functions include ReLU, logistic, tanh, exponential linear unit, linear functions, for instance. In this work, our approach aims at being capable of dealing with activation functions regardless of their specific forms.

An MLP has multiple layers,  each layer $\ell$, $1 \le \ell \le L $, has $n^{[\ell]}$ neurons.  In particular, layer $\ell =0$ is used to denote the input layer and $n^{[0]}$ stands for the number of inputs in the rest of this paper. 
Analogously,  $n^{[L]}$ stands for the last layer, that is the output layer. For a neuron $i$, $1 \le i \le n^{[\ell]}$ in layer $\ell$, the corresponding input vector is denoted by $\boldsymbol{\upeta}^{[\ell]}$ and the weight matrix is 
\begin{equation}
\mathbf{W}^{[\ell]} = \left[\omega_{1}^{[\ell]},\ldots,\omega_{n^{[\ell]}}^{[\ell]}\right]^{\top}
\end{equation}
where $\omega_{i}^{[\ell]}$ is the weight vector. The bias vector for layer $\ell$ is
\begin{equation} 
\boldsymbol {\uptheta}^{[\ell]}=\left[\theta_1^{[\ell]},\ldots,\theta_{n^{[\ell]}}^{[\ell]}\right]^{\top}.
\end{equation} 

The output vector of layer $\ell$ can be expressed as 
\begin{equation}
\mathbf{u}^{[\ell]}=\phi_{\ell}(\mathbf{W}^{[\ell]}\boldsymbol{\upeta}^{[\ell]}+\uptheta^{[\ell]})
\end{equation} 
where $\phi_{\ell}(\cdot)$ is the activation function of layer $\ell$.

For an MLP, the output of $\ell-1$ layer is the input of $\ell$ layer, and the mapping from the input of input layer, that is $\boldsymbol{\upeta}^{[0]}$, to the output of output layer, namely $\mathbf{u}^{[L]}$, stands for the input-output relation of the MLP, denoted by
\begin{equation}\label{NN}
\mathbf{u}^{[L]} = \Phi (\boldsymbol{\upeta}^{[0]})
\end{equation}    
where $\Phi(\cdot) \triangleq \phi_L  \circ \phi_{L - 1}  \circ  \cdots  \circ \phi_1(\cdot) $.

\subsection{Neural Network Control Systems}
In this paper we consider the following discrete-time nonlinear dynamic system in the form of
\begin{align} \label{system}
\left\{ {\begin{array}{*{20}l}
	\mathbf{x}(t+1) = f(\mathbf{x}(t),\mathbf{u}(t))\\
	\mathbf{y}(t) = h(\mathbf{x}(t))
	\end{array} } \right.
\end{align}
where $\mathbf{x}(t) \in \mathbb{R}^{n_x}$ is the system state, $\mathbf{u}(t) \in \mathbb{R}^{n_u}$ and $\mathbf{y}(t) \in \mathbb{R}^{n_y}$ are control input and controlled output, respectively. The general form of nonlinear controllers is 
\begin{align}
\mathbf{u}(t) = \gamma(\mathbf{y}(t),\mathbf{v}(t),t)
\end{align}
where $\mathbf{v}(t) \in \mathbb{R}^{n_v}$ is the reference input for the controller. However, for general $f$ and $h$, the challenging controller design problem is still open even $f$ and $h$ are known. 

Recently, some data-driven approaches have been developed to avoid the difficulties in controller design when system models are complex. One effective method is to use input-output data to train neural networks that are able to generate appropriate control input signals to achieve control objectives. The neural network controller is in the form of 
\begin{align}\label{neural_network_controller}
\mathbf{u}(t) = \Phi(\mathbf{y}(t),\mathbf{v}(t)).
\end{align}

By letting $\boldsymbol{\upeta}(t) = [\mathbf{y}^{\top}(t)~\mathbf{v}^{\top}(t)]^{\top}$, the neural network controller can be rewritten as 
\begin{align}
\label{compact_neural_network_controller}
\mathbf{u}(t) = \Phi(\boldsymbol{\upeta}(t)).
\end{align}

Substituting the above controller into system (\ref{system}), the closed-loop system can be described by
\begin{align}\label{closed_loop}
\left\{ {\begin{array}{*{20}l}
	\mathbf{x}(t+1) = f(\mathbf{x}(t),\Phi(\boldsymbol{\upeta}(t)))\\
	\mathbf{y}(t) = h(\mathbf{x}(t))
	\end{array} } \right.
\end{align}
where $\boldmath{\upeta}(t) = [\mathbf{y}^{\top}(t)~ \mathbf{v}^{\top}(t)]^{\top}$.


Furthermore, we consider continuous-time nonlinear systems in the form of
\begin{align} \label{con_system}
\left\{ {\begin{array}{*{20}l}
	\dot{\mathbf{x}}(t) = f(\mathbf{x}(t),\mathbf{u}(t))\\
	\mathbf{y}(t) = h(\mathbf{x}(t))
	\end{array} } \right. .
\end{align}

In practical applications, there inevitably exists an amount of time for computing the output of neural network (\ref{compact_neural_network_controller}) after receiving input  $\boldsymbol{\upeta} = [\mathbf{y}^{\top}~ \mathbf{v}^{\top}]^{\top}$. Thus, the controller generates controller signals at sampling time instants $t_k$, $k \in \mathbb{N}$, and the value holds between two successive sampling instants $t_k$ and $t_{k+1}$. The sampled neural network controller is given in the following form
\begin{align}\label{con_neural_network_controller}
\mathbf{u}(t) = \Phi(\boldsymbol{\upeta}(t_k)),~t \in [t_k,t_{k+1}).
\end{align}

Substituting the above controller into nonlinear system (\ref{con_system}), the closed-loop system can be obtained as
\begin{align}\label{con_closed_loop}
\left\{ {\begin{array}{*{20}l}
	\dot{\mathbf{x}}(t) = f(\mathbf{x}(t),\Phi(\boldsymbol{\upeta}(t_k)))\\
	\mathbf{y}(t) = h(\mathrm{x}(t))
	\end{array} } \right.,~t \in [t_k,t_{k+1})
\end{align}
where $\boldsymbol{\upeta}(t_k) = [\mathbf{y}^{\top}(t_k)~ \mathbf{v}^{\top}(t_k)]^{\top}$. The system in the form of (\ref{con_closed_loop}) is a sampled-data model for neural network control systems as proposed in \cite{wu2014exponential}. 

The mechanisms of systems (\ref{closed_loop}) and (\ref{con_closed_loop}) are illustrated in Figure \ref{nncs}. As there already exists an extensive literature for reachable set estimation of discrete-time and continuous-time plant models,  the key issue of estimating the reachable set for a neural network control system is to develop methods to estimate the output set of the neural network controller embedded in the closed-loop system. With the solution of estimating output set of neural network controllers, there is no essential difference in reachable set estimation between discrete-time system (\ref{closed_loop}) and continuous-time sample-data system (\ref{con_closed_loop}). The only difference lies in the continuous-time/discrete-time tools that we will employ for computing the reachable set of  plants. Therefore, the rest of this paper will focus on addressing the reachable set estimation problem of discrete-time system (\ref{system}). 

According to the Universal Approximation Theorem \cite{hornik1989multilayer}, it guarantees that, in
principle, such an MLP in the form of (\ref{NN}), namely the function $\Phi(\cdot)$, is able to approximate any nonlinear real-valued function. Despite the impressive ability of approximating nonlinear functions,  much complexities represent in predicting the output behaviors of MLP  (\ref{NN}) as well as systems with neural network controllers (\ref{neural_network_controller}) because of the nonlinearities of MLPs. In the most of real applications, an MLP is usually viewed as a black box to generate  a desirable output with respect to a given input. However, regarding  property verifications such as the safety verification, it has been observed that even a well-trained neural network can react in unexpected and incorrect ways to even slight perturbations of their inputs, which could result in unsafe even unstable systems. Thus, to verify safety properties of dynamical systems with neural network controllers, it is necessary to have a reachable set estimation for the closed-loop system in the form of (\ref{closed_loop}) or (\ref{con_closed_loop}) over a given finite time horizon, which is able to cover all possible values of system state in the given interval, to assure that the state trajectories of the closed-loop system will not attain unreasonable values. 

\begin{figure}
	\begin{center}
	\includegraphics[width=12cm]{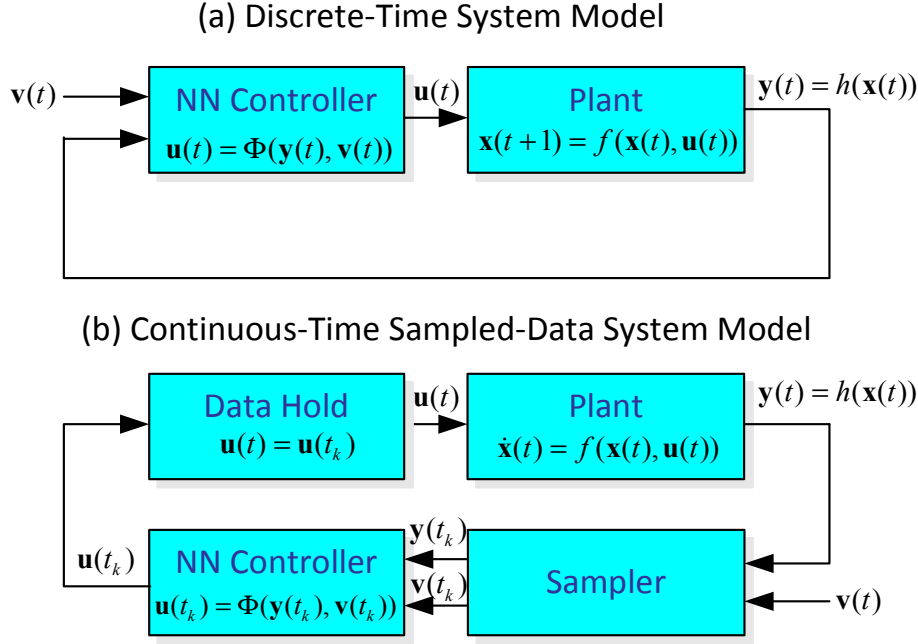}
	\caption{\boldmath Illustration for closed-loop systems with neural network (NN) controllers. Figure (a) is for discrete-time system model in the form of (\ref{closed_loop}) and Figure (b) is for continuous-time systems (\ref{con_closed_loop}) with samplings.  }
	\end{center}
	\label{nncs} 
\end{figure}

\section{Problem Formulation}
We start by defining the MLP output set that will become of interest all through the rest of this paper. With the inputs belonging to a set $\mathcal{H}$, the output  set of an MLP is defined as follows.
\begin{definition}
	Given an MLP in the form of (\ref{NN}) and an input
	set $\mathcal{H}$, the following set
	\begin{align}
	\mathcal{U} = \left\{\mathbf{u} ^{[L]} \in \mathbb{R}^{n_u} \mid \mathbf{u}^{[L]} = \Phi (\boldsymbol{\upeta}^{[0]}),~ \boldsymbol{\upeta}^{[0]} \in \mathcal{H}\right\}
	\end{align}
	is called the output set of MLP (\ref{NN}). 
\end{definition}

Since MLPs are often large, nonlinear, and non-convex, it
is extremely difficult to compute the exact output reachable
set $\mathcal{U}$ for an MLP. Rather than directly computing the exact output reachable set for an MLP, a more practical and feasible way is to derive an over-approximation of $\mathcal{U}$. 

\begin{definition}
	A set $\mathcal{U}_e$ is called an output reachable set estimation of MLP (\ref{NN}), if $\mathcal{U}\subseteq
	\mathcal{U}_e$ holds, where $\mathcal{U}$ is the output
	reachable set of MLP (\ref{NN}).
\end{definition}

The first problem of interest is the reachable set estimation of MLPs in the form of (\ref{NN}).
\begin{problem}\label{problem1}
	Given a bounded input set $\mathcal{H}$ and an MLP
	described by (\ref{NN}), how does one find a set $\mathcal{U}_e$ such that $\mathcal{U} \subseteq \mathcal{U}_e$, and
	make the estimation set $\mathcal{U}_e$ as small as possible\footnote{For a set $\mathcal{Y}$, its over-approximation $\tilde{\mathcal{Y}}_1$ is smaller than another over-approximation $\tilde{\mathcal{Y}}_2$ if 
		$
		d_H(\tilde{\mathcal{Y}}_1,\mathcal{Y}) < d_H(\tilde{\mathcal{Y}}_2,\mathcal{Y})
		$ holds,
		where $d_H$ stands for the Hausdorff distance.}?
\end{problem} 

Let us denote the solution of neural network control system (\ref{closed_loop}) by $\mathbf{x}(t; \mathbf{x}_0,\mathbf{v}(\cdot))$, where $t \in \mathbb{R}_{\ge 0}$ is the time, $\mathbf{x}_0 \in \mathbb{R}^{n_x}$ is the  initial state, $\mathbf{v}(t) \in \mathbb{R}^{n_v}$ is the system input at $t$, $\mathbf{v}(\cdot)$ is the input trajectory. The reachable set of system state at time $t$ can be defined for a set of initial states $\mathcal{X}_0$ and a set of input values $\mathcal{V}$. 
\begin{definition}
	Given a neural network control system in the form of (\ref{closed_loop})  with initial set $\mathcal{X}_0$ and input set $\mathcal{V}$, the reachable set at  time $t$ is
	\begin{equation}
	\mathcal{R}(t) =\left\{\mathbf{x}(t;\mathbf{x}_0,\mathbf{v}(\cdot))\in\mathbb{R}^{n_x} \mid \mathbf{x}_0 \in \mathcal{X}_0, \mathbf{v}(t) \in \mathcal{V}\right\}
	\end{equation}
	and the reachable set over time interval $[t_0,t_f]$ is defined by
	\begin{equation}
	\mathcal{R}([t_0,t_f]) = \bigcup\nolimits_{t \in [t_0,t_f]}\mathcal{R}(t) .
	\end{equation}
\end{definition}

However, exact reachable
sets cannot be computed for most system classes. Especially, it is extremely difficult to compute the exact reachable set $\mathcal{R}(t)$ and $\mathcal{R}([t_0,t_f])$ for systems consisting of neural network components. Similar as Problem \ref{problem1}, instead of computing the exact reachable set $\mathcal{R}(t)$ and $\mathcal{R}([t_0,t_f])$, a more practical and feasible way is to derive over-approximations
of $\mathcal{R}(t)$ and $\mathcal{R}([t_0,t_f])$. The over-approximation results will be sufficient for formal verification.
\begin{definition}\label{def2}
	A set $\mathcal{R}_e(t)$ is an over-approximation of $\mathcal{R}(t)$
	at time $t$ if $\mathcal{R}(t) \subseteq \mathcal{R}_e(t)$ holds.  Moreover, $\mathcal{R}_e([t_0,t_f]) = \bigcup\nolimits_{t \in[t_0,t_f]} \mathcal{R}_e(t)$ is an
	over-approximation of $\mathcal{R}([t_0,t_f])$ over time interval $[t_0, t_f]$.
\end{definition}

Based on Definition \ref{def2}, the problem of reachable set estimation for neural network control system (\ref{closed_loop}) is given as below.
\begin{problem}\label{problem2}
	Given a time $t$, how does one find the set $\mathcal{R}_e(t)$ such that $\mathcal{R}(t) \subseteq \mathcal{R}_e(t)$, given a bounded initial set $\mathcal{X}_0$ and an input set $\mathcal{V}$?
\end{problem}

In this work, we will focus on the safety verification for neural network control systems. The safety specification is expressed by a set defined in the state space, describing the safety requirement.
\begin{definition}
	Safety specification $\mathcal{S}$ formalizes the safety requirements for state $\mathbf{x}(t)$ of neural network control system (\ref{closed_loop}), and is a predicate over state $\mathbf{x}(t)$ of neural network control system (\ref{closed_loop}). The neural network control system (\ref{closed_loop}) is safe over time interval $[t_0, t_f ]$ if and only if the	following condition is satisfied:
	\begin{equation}\label{verification}
	\mathcal{R}([t_0,t_f])  \cap \neg \mathcal{S} = \emptyset
	\end{equation}
	where $\neg$ is the symbol for logical negation.
\end{definition}

Therefore, the safety verification problem for neural network control system (\ref{closed_loop}) is stated as  follows.
\begin{problem}\label{problem3}
	How does one verify the safety requirement described by (\ref{verification}), given a neural network control system (\ref{closed_loop}) with a bounded initial set $\mathcal{X}_0$ and an input set $\mathcal{V}$ and a
	safety specification $\mathcal{S}$?
\end{problem}

Before ending this section, a lemma is presented to show that the safety verification of neural network control system (\ref{closed_loop}) can be relaxed by checking with the over-approximation of reachable set.
\begin{lemma}\label{lemma1}
	Consider a neural network control system in the form of (\ref{closed_loop}) and a safety specification $\mathcal{S}$, the neural network control system is safe in time interval $[t_0,t_f]$ if the following condition is satisfied
	\begin{equation}\label{lemma1_1}
	\mathcal{R}_e([t_0,t_f]) \cap \neg \mathcal{S} = \emptyset
	\end{equation}
	where $\mathcal{R}([t_0,t_f])  \subseteq\mathcal{R}_e([t_0,t_f])$.
\end{lemma}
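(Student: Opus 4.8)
The plan is to prove Lemma~\ref{lemma1} directly from the definitions of the reachable set, its over-approximation, and the safety condition, using only elementary set-theoretic reasoning. The statement asserts that the over-approximation-based safety check (\ref{lemma1_1}) is a sound relaxation of the exact safety check (\ref{verification}): if the larger set $\mathcal{R}_e([t_0,t_f])$ does not intersect the unsafe region $\neg\mathcal{S}$, then neither does the true reachable set $\mathcal{R}([t_0,t_f])$, and hence the system is safe by the definition of safety.

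First I would invoke the hypothesis $\mathcal{R}([t_0,t_f]) \subseteq \mathcal{R}_e([t_0,t_f])$, which is given as part of the lemma. Then, from the monotonicity of intersection with respect to set inclusion, $\mathcal{R}([t_0,t_f]) \cap \neg\mathcal{S} \subseteq \mathcal{R}_e([t_0,t_f]) \cap \neg\mathcal{S}$. Combining this with the assumed condition (\ref{lemma1_1}), namely $\mathcal{R}_e([t_0,t_f]) \cap \neg\mathcal{S} = \emptyset$, forces $\mathcal{R}([t_0,t_f]) \cap \neg\mathcal{S} = \emptyset$. By the Definition of safety specification, this equality is precisely the condition (\ref{verification}) characterizing that the neural network control system (\ref{closed_loop}) is safe over $[t_0,t_f]$, which completes the argument.

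There is no substantial obstacle here: the lemma is a routine soundness statement, and the only care needed is to make explicit that the over-approximation property is preserved under the operations involved (intersection and emptiness), and to cite the earlier Definition linking $\mathcal{R}([t_0,t_f]) \cap \neg\mathcal{S} = \emptyset$ to safety. If anything, the proof could be padded by also noting the pointwise version — that $\mathcal{R}_e([t_0,t_f]) = \bigcup_{t\in[t_0,t_f]}\mathcal{R}_e(t)$ with $\mathcal{R}(t)\subseteq\mathcal{R}_e(t)$ for each $t$ (Definition~\ref{def2}) already implies the interval-level inclusion — but this is not needed since the inclusion is taken as a hypothesis. The proof is therefore a two-line chain of inclusions followed by a citation of the safety definition.
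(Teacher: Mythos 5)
Your proposal is correct and follows essentially the same argument as the paper: the inclusion $\mathcal{R}([t_0,t_f])\subseteq\mathcal{R}_e([t_0,t_f])$ together with condition (\ref{lemma1_1}) forces $\mathcal{R}([t_0,t_f])\cap\neg\mathcal{S}=\emptyset$, which is the safety condition (\ref{verification}). The only difference is that you spell out the monotonicity-of-intersection step and the citation of the safety definition explicitly, which the paper leaves implicit.
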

\begin{proof}
	Due to $\mathcal{R}([t_0,t_f])  \subseteq\mathcal{R}_e([t_0,t_f])$, condition (\ref{lemma1_1}) implies $\mathcal{R}([t_0,t_f]) \cap\neg \mathcal{S} = \emptyset$.
	The proof is complete.
\end{proof}

Lemma \ref{lemma1} implies that the over-approximated reachable set $\mathcal{R}_e([t_0,t_f])$ is qualified for
the safety verification over interval $[t_0,t_f]$. The  three linked problems are the main concerns
to be addressed in the rest of the paper. Essentially, the very first and
basic problem is the Problem \ref{problem1}, namely finding efficient methods to estimate the output set of an MLP. In the remainder of this paper, attentions
are mainly devoted to give solutions for Problem \ref{problem1}, and then extend to solve Problem \ref{problem2}. Problem \ref{problem3} will be considered on the basis of solutions of Problems \ref{problem1} and \ref{problem2}.

\section{Reachability Analysis for Neural Networks}
In this section, we aim at finding ways to over-approximate the output set of an MLP. The input and output sets are formalized as a union of so-called hyper-rectangular sets. An LP based algorithm is developed to efficiently compute an over-approximation of output set of a given MLP.
\subsection{Hyper-Rectangular Sets}
Due to the complex structure and nonlinearities in activation functions, estimating the output reachable set of an MLP represents much difficulties if only using analytical methods. One possible way to circumvent those difficulties is to reduce the input set into a finite number of regularized subregions that are convenient to handle. For a bounded set $\mathcal{H} \in \mathbb{R}^{n}$, we use a finite number of hyper-rectangles to over-approximate it. The hyper-rectangles are constructed as follows: 

For any bounded set $\mathcal{H} \in \mathbb{R}^{n}$, we have $\mathcal{H} \subseteq \mathcal{H}_e$,where $\mathcal{H}_e = \{\boldsymbol{\upeta} \in \mathbb{R}^{n} \mid \underline{\boldsymbol{\upeta}} \le \boldsymbol{\upeta} \le \overline{\boldsymbol{\upeta}}\}$, in which $\underline{\boldsymbol{\upeta}}$ and $\overline{\boldsymbol{\upeta}}$ are defined as the lower and upper bounds of elements of $\boldsymbol{\upeta}$ in $\mathcal{H}$. They are given in detail as
\begin{align}
\underline{\boldsymbol{\upeta}}&=[\mathrm{inf}_{\eta \in \mathcal{H}}(\eta_1),\ldots,\mathrm{inf}_{\eta\in \mathcal{H}}(\eta_n)]^{\top} ,
\\
\overline{\boldsymbol{\upeta}}&=[\mathrm{sup}_{\eta\in \mathcal{H}}(\eta_1),\ldots,\mathrm{sup}_{\eta\in \mathcal{H}}(\eta_n)]^{\top} .
\end{align}

Then, we are able to partition interval \begin{align}
\mathcal{I}_i=[\mathrm{inf}_{\eta\in \mathcal{H}}(\eta_i),~\mathrm{sup}_{\eta\in \mathcal{H}}(\eta_i)],~i \in \{1,\ldots,n\}
\end{align}
into $M_i$ segments as $\mathcal{I}_{i,1}=[\eta_{i,0},\eta_{i,1}]$, $\mathcal{I}_{i,2}=[\eta_{i,1},\eta_{i,2}]$, $\ldots$, $\mathcal{I}_{i,M_i}=[\eta_{i,M_i-1},\eta_{i,M_i}]$, where $\eta_{i,m}$, $m \in \{0,1,\ldots,M_i\}$ are defined 
\begin{align}
\eta_{i,n} = \eta_{i,0} + \frac{m(\eta_{i,M_i}-\eta_{i,0})}{M_i},~m \in \{0,1,\ldots,M_i\}
\end{align} 
with $\eta_{i,0}=\mathrm{inf}_{\eta\in \mathcal{H}}(\eta_i)$
and $\eta_{i,M_i}=\mathrm{sup}_{\eta\in \mathcal{H}}(\eta_i)$. Then,
the hyper-rectangles can be constructed as \begin{align}
\mathcal{P}_i = \mathcal{I}_{1,m_1} \times\cdots\times \mathcal{I}_{n,m_n},~i \in \{1,2,\ldots,\prod\nolimits_{s=1}^{n}M_s\}
\end{align} 
where $\{m_1,\ldots,m_n\}\in \{1,\ldots,M_1\} \times \cdots \times \{1,\ldots,M_n\}$.  

To remove redundant hyper-rectangles, we have to check the existence of intersections between these hyper-rectangles and set $\mathcal{H}$. Hyper-rectangle $\mathcal{P}_i$ satisfying $\mathcal{P}_i \cap \mathcal{H} = \emptyset$ should be removed.
Thus, for any bounded set $\mathcal{H} \in \mathbb{R}^{n}$ with prescribed $M_i$, $i \in \{1,\ldots,n\} $, a finite number of hyper-rectangles $\mathcal{P}_i$, $i \in\{1,\ldots,N\}$ where $1 \le N \le \prod\nolimits_{s=1}^{n}M_s$, can be constructed to over-approximate it.  It can be noted that larger $M_i$, $i \in \{1,\ldots,n\} $ will lead to a preciser over-approximation for the original set $\mathcal{H}$, which will yield better results in the following reachable set estimation, but more hyper-rectangles will be produced which requires more computational costs.  
The hyper-rectangle construction process is summarized by \texttt{hyrec} function in Algorithm \ref{alg1}.

\begin{algorithm}
	\caption{Partition an input set} \label{alg1}
	\begin{algorithmic}[1]
		\Require Set $\mathcal{H}$, partition numbers $M_i$, $i \in \{ 1,\ldots,n\}$
		\Ensure Partition $\mathscr{P} = \{\mathcal{P}_1,\mathcal{P}_2,\ldots,\mathcal{P}_N\}$
		
		\Function{\texttt{hyrec}}{$\mathcal{H}$, $M_i$, $i \in \{ 1,\ldots,n\}$}
		
		\State $\eta_{i,0} \gets \mathrm{inf}_{\eta\in \mathcal{H}}(\eta_i)$, $\eta_{i,M_i} \gets \mathrm{sup}_{\eta\in \mathcal{H}}(\eta_i)$
		
		\For{$i = 1:1:n$}
		\For{$j = 1:1:M_i$}
		\State $\eta_{i,j} \gets \eta_{i,0} + \frac{j(\eta_{i,M_i}-\eta_{i,0})}{M_i}$
		\State $\mathcal{I}_{i,j} \gets [\eta_{i,j-1},\eta_{i,j}]$
		\EndFor
		\EndFor
		
		\State $\mathcal{P}_i \gets \mathcal{I}_{1,m_1} \times\cdots\times \mathcal{I}_{n,m_n}$
		\If{$\mathcal{P}_i \cap \mathcal{H} = \emptyset$}
		\State Remove $\mathcal{P}_i$
		\EndIf
		
		\State\Return $\mathscr{P} = \{\mathcal{P}_1,\mathcal{P}_2,\ldots,\mathcal{P}_N\}$
		
		\EndFunction
	\end{algorithmic}
\end{algorithm}

\subsection{Reachable Set Estimation}
By partitioning the input set $\mathcal{H}$ into $N$ hyper-rectangular sets, the key step for estimating the  output set of an MLP is to find a way to estimate the output set for each individual hyper-rectangle as the input to the MLP. First, we consider a single layer.
\begin{theorem}\label{thm1}
	Consider a single layer $
	\mathbf{u}=\phi(\mathbf{W}\boldsymbol{\upeta}+\boldsymbol{\uptheta})
	$, if the input set is a hyper-rectangle described by  $\mathcal{I}_{1} \times\cdots\times \mathcal{I}_{n_\eta}$ where $\mathcal{I}_{i} = [\underline{\eta}_i,\overline \eta_i]$, $i \in \{1,\ldots,n_\eta\}$, the output set can be over-approximated by a hyper-rectangle in the expression of intervals $\mathcal{I}_{1} \times\cdots\times \mathcal{I}_{n_u}$, where $\mathcal{I}_{i}$, $i \in \{1,\ldots,n_u\}$, can be computed by 
	$
	\mathcal{I}_{i} = 	[\underline{u}_i ,\overline u_i]
	$,
	where $\underline{u}_i$ and  $\overline{u}_i$ are computed by
	\begin{align}
	\underline{u}_i &= \min_{\boldsymbol{\upeta} \in \mathcal{H}} \phi\left(\sum\nolimits_{j=1}^{n_\eta}\omega_{ij} \eta_j + \theta_i\right), \label{thm1_1}
	\\
	\overline u_i &= \max_{\boldsymbol{\upeta} \in \mathcal{H}} \phi\left(\sum\nolimits_{j=1}^{n_\eta}\omega_{ij} \eta_j + \theta_i\right) . \label{thm1_2}
	\end{align} 
\end{theorem}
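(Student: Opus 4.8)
The plan is to prove the set containment $\mathcal{U} \subseteq \mathcal{I}_1 \times \cdots \times \mathcal{I}_{n_u}$ directly from the coordinatewise definition of the bounds in (\ref{thm1_1})--(\ref{thm1_2}). First I would fix an arbitrary output point: let $\mathbf{u} = \phi(\mathbf{W}\boldsymbol{\upeta}+\boldsymbol{\uptheta})$ be generated by some input $\boldsymbol{\upeta} \in \mathcal{H}$, where $\mathcal{H} = \mathcal{I}_1 \times \cdots \times \mathcal{I}_{n_\eta}$ with $\mathcal{I}_i = [\underline{\eta}_i,\overline{\eta}_i]$. For each index $i \in \{1,\ldots,n_u\}$, the $i$-th component of $\mathbf{u}$ is exactly $u_i = \phi(\sum_{j=1}^{n_\eta}\omega_{ij}\eta_j + \theta_i)$, which is precisely the scalar quantity being minimized and maximized in (\ref{thm1_1})--(\ref{thm1_2}). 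Hence, by definition of the minimum and maximum over $\mathcal{H}$, we have $\underline{u}_i \le u_i \le \overline{u}_i$, i.e.\ $u_i \in \mathcal{I}_i$. Since this holds for every coordinate $i$, we get $\mathbf{u} \in \mathcal{I}_1 \times \cdots \times \mathcal{I}_{n_u}$, and as $\mathbf{u}$ was an arbitrary element of $\mathcal{U}$, the output set is contained in this hyper-rectangle, which is the claim.

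The one technical point worth verifying is that the extrema in (\ref{thm1_1})--(\ref{thm1_2}) are actually attained, so that writing $\min$ and $\max$ is justified and the intervals $\mathcal{I}_i = [\underline{u}_i,\overline{u}_i]$ are well-defined closed sets. I would argue this via the extreme value theorem: $\mathcal{H}$ is a finite product of compact intervals $[\underline{\eta}_i,\overline{\eta}_i]$, hence a compact subset of $\mathbb{R}^{n_\eta}$; the affine map $\boldsymbol{\upeta} \mapsto \mathbf{W}\boldsymbol{\upeta}+\boldsymbol{\uptheta}$ is continuous; and the activation function $\phi$ is continuous by the standing assumption on activation functions. Therefore each map $\boldsymbol{\upeta} \mapsto \phi(\sum_{j}\omega_{ij}\eta_j + \theta_i)$ is continuous on a compact set and attains both its minimum and maximum, so $\underline{u}_i$ and $\overline{u}_i$ exist.

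I do not anticipate any real obstacle: the statement is essentially that the image of a box under a function is contained in the box of coordinatewise ranges, and the argument is a one-line containment check together with a compactness remark. The only subtlety I would flag explicitly is that the resulting hyper-rectangle is in general a strict over-approximation --- the exact image $\mathcal{U}$ need not itself be a box even for a single layer --- so the theorem asserts only $\mathcal{U} \subseteq \mathcal{I}_1 \times \cdots \times \mathcal{I}_{n_u}$ and makes no claim of tightness; this is exactly why the subsequent partitioning of the input set into finer cells (larger $M_i$) is what recovers precision.
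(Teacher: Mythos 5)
Your proof is correct and follows essentially the same route as the paper's: the containment is immediate from the coordinatewise definition of the extrema in (\ref{thm1_1})--(\ref{thm1_2}), and the existence of these extrema is justified exactly as the paper does, by continuity of $\phi$ together with compactness (boundedness and closedness) of the input hyper-rectangle. Your version merely spells out the containment step and the extreme value theorem more explicitly than the paper's one-line argument.
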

\begin{proof}
	Since activation function $\phi(\cdot)$ is continuous and hyper-rectangle $\mathcal{I}_{1} \times\cdots\times \mathcal{I}_{n_\eta}$ is bounded, there exist $\underline{u}_i$ and $\overline u_i$ defined by (\ref{thm1_1}) and (\ref{thm1_2})  such that $\underline{u}_i \le u_i \le \overline u_i$ holds for any $i \in \{1,\ldots,n_u\}$. The proof is complete. 
\end{proof}
To obtain the output hyper-rectangle, we need to compute the minimum and maximum values of the output of nonlinear function $\phi$. For general nonlinear functions, the optimization problems are still challenging. Typical activation functions include ReLU, logistic, tanh, exponential linear unit, linear functions, for instance, satisfy the following monotonic assumption.

\begin{assumption}\label{assumption_1}
	For any two scalars $z_1 \le z_2$, the activation function satisfies $\phi(z_1) \le \phi(z_2)$. 
\end{assumption}

Assumption \ref{assumption_1} is a common property that can be satisfied by a variety of activation functions. For example, it is easy to verify that the most commonly used such as logistic, tanh,  ReLU, all satisfy Assumption \ref{assumption_1}.  Taking advantage of the monotonic property, optimization problems (\ref{thm1_1}) and (\ref{thm1_2}) can be turned into LP problems which can be solved efficiently with the aid of existing tools.

\begin{theorem}\label{thm2}
	Consider a single layer $
	\mathbf{u}=\phi(\mathbf{W}\boldsymbol{\upeta}+\boldsymbol{\uptheta})
	$
	with activation functions satisfying Assumption \ref{assumption_1}, if the input set is a hyper-rectangle described by  $\mathcal{I}_{1} \times\cdots\times \mathcal{I}_{n_\eta}$ where $\mathcal{I}_{i} = [\underline{\eta}_i,\overline\eta_i]$, $i \in \{1,\ldots,n_\eta\}$, the output set can be over-approximated by a hyper-rectangle in the expression of intervals $\mathcal{I}_{1} \times\cdots\times \mathcal{I}_{n_u}$, where $\mathcal{I}_{i}$, $i \in \{1,\ldots,n_u\}$, can be computed by 
	$
	\mathcal{I}_{i} = 	[\phi(\underline{z}_i),\phi(\overline{z}_i)]
	$,	where $\underline{z}_i$ and  $\overline z_i$ are the solutions of the LP problems
	\begin{align}
	&\min \underline{z}_i \nonumber
	\\ \mathrm{s.t.}~~&  
	\underline{z}_i = \sum\nolimits_{j=1}^{n_\eta}\omega_{ij} \eta_j + \theta_i, \nonumber
	\\
	&\underline{\eta}_j \le \eta_j \le \overline{\eta}_j,~j \in \{1,\ldots,n_\eta\}, \label{thm2_1}
	\end{align}
	\begin{align} 
	&\max \overline z_i  \nonumber
	\\
	\mathrm{s.t.}~~&\overline z_i = \sum\nolimits_{j=1}^{n_\eta}\omega_{ij} \eta_j + \theta_i \nonumber
	\\
	&\underline{\eta}_j \le \eta_j \le \overline{\eta}_j,~j \in \{1,\ldots,n_\eta\}  . \label{thm2_2}
	\end{align} 
\end{theorem}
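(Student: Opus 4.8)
The plan is to reduce the optimization problems \eqref{thm1_1} and \eqref{thm1_2} from Theorem \ref{thm1} to the linear programs \eqref{thm2_1} and \eqref{thm2_2} by exploiting the monotonicity in Assumption \ref{assumption_1}. The starting point is Theorem \ref{thm1}, which already guarantees that $\mathcal{I}_i = [\underline{u}_i,\overline{u}_i]$ with $\underline{u}_i$ and $\overline{u}_i$ given by the minimum and maximum of $\phi(\sum_j \omega_{ij}\eta_j + \theta_i)$ over the hyper-rectangle. So the only thing left to prove is that these values equal $\phi(\underline{z}_i)$ and $\phi(\overline{z}_i)$ respectively, where $\underline{z}_i$, $\overline{z}_i$ solve the stated LPs.

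First I would observe that the affine map $g_i(\boldsymbol{\upeta}) = \sum_{j=1}^{n_\eta}\omega_{ij}\eta_j + \theta_i$ is continuous, hence its image over the compact box $\mathcal{H} = \mathcal{I}_1 \times \cdots \times \mathcal{I}_{n_\eta}$ is a compact interval $[z_i^{\min}, z_i^{\max}]$; moreover $z_i^{\min}$ and $z_i^{\max}$ are exactly the optimal values of the linear programs \eqref{thm2_1} and \eqref{thm2_2}, i.e. $z_i^{\min} = \underline{z}_i$ and $z_i^{\max} = \overline{z}_i$. (These LPs are feasible and bounded because the feasible region is the nonempty compact box.) Next, since $\phi$ is continuous on $[\underline{z}_i, \overline{z}_i]$, the composite $\phi \circ g_i$ attains its extrema over $\mathcal{H}$, and because $g_i(\mathcal{H}) = [\underline{z}_i, \overline{z}_i]$ we have
\begin{align}
\{\phi(g_i(\boldsymbol{\upeta})) : \boldsymbol{\upeta} \in \mathcal{H}\} = \{\phi(z) : z \in [\underline{z}_i,\overline{z}_i]\}. \nonumber
\end{align}
Then I would invoke Assumption \ref{assumption_1}: monotonicity of $\phi$ on $[\underline{z}_i,\overline{z}_i]$ gives $\phi(\underline{z}_i) \le \phi(z) \le \phi(\overline{z}_i)$ for every $z$ in that interval, so $\min_z \phi(z) = \phi(\underline{z}_i)$ and $\max_z \phi(z) = \phi(\overline{z}_i)$. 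Combining, $\underline{u}_i = \phi(\underline{z}_i)$ and $\overline{u}_i = \phi(\overline{z}_i)$, which is the claim; over-approximation of the layer's output set then follows directly from Theorem \ref{thm1}.

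I do not anticipate a genuine obstacle here — the argument is essentially a chain of elementary facts (continuity/compactness, the image of a box under an affine map realized by an LP, and monotone functions attaining extrema at interval endpoints). The one point requiring a little care is making explicit that the optimal values of the two LPs coincide with $\inf$ and $\sup$ of $g_i$ over $\mathcal{H}$, rather than just being some feasible values; this is immediate since the LP constraint set is precisely $\mathcal{H}$ and the objective is $g_i$ itself. A secondary subtlety worth a remark is that Assumption \ref{assumption_1} need only hold on the relevant interval $[\underline{z}_i,\overline{z}_i]$, though since the listed activation functions are globally monotone this is not restrictive. I would keep the written proof to a few lines: state $g_i(\mathcal{H}) = [\underline{z}_i,\overline{z}_i]$ via the LPs, push through $\phi$, apply monotonicity, and cite Theorem \ref{thm1}.
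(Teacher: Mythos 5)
Your proof is correct and follows essentially the same route as the paper: bound the affine pre-activation by the LP optima $\underline{z}_i \le \sum_j \omega_{ij}\eta_j + \theta_i \le \overline{z}_i$ and push through the monotone activation via Assumption \ref{assumption_1}. The extra observations you add (exactness of the per-coordinate bounds via $g_i(\mathcal{H})=[\underline{z}_i,\overline{z}_i]$, and the appeal to Theorem \ref{thm1}) are harmless refinements of the same argument rather than a different approach.
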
 
\begin{proof}
	From (\ref{thm2_1}) and (\ref{thm2_2}), one can obtain
	\begin{align}
	\underline{z}_i	\le \sum\nolimits_{j=1}^{n_\eta}\omega_{ij} \eta_j + \theta_i \le \overline z_i
	\end{align}
	where $\underline{z}_i$ and $\overline z_i$ are the solutions of (\ref{thm2_1}) and (\ref{thm2_2}), respectively. Then, using the monotonic Assumption \ref{assumption_1} to get 
	\begin{align}
	\phi(\underline{z}_i)\le \phi(\sum\nolimits_{j=1}^{n_\eta}\omega_{ij} \eta_j + \theta_i) \le \phi(\overline z_i) .
	\end{align}
	Thus, the proof is complete. 
\end{proof}

Theorem \ref{thm2} uses the monotonic property of activations to avoid  complex nonlinear optimization problems. Actually, this idea can be also applied to other types of activation functions even when the monotonic condition does not hold. For example, if we consider Gaussian function $\phi(z) = e^{-z^2}$, the following result can be obtained by the similar guidelines in Theorem \ref{thm2}. 

\begin{theorem}\label{thm3}
	Consider a single layer $
	\boldsymbol{u} = \phi(\mathbf{W}\boldsymbol{\upeta}+\boldsymbol{\uptheta})
	$
	with $\phi(z) = e^{-z^2}$, if the input set is a hyper-rectangle described by  $\mathcal{I}_{1} \times\cdots\times \mathcal{I}_{n_\eta}$ where $\mathcal{I}_{i} = [\underline{\eta}_i,\overline \eta_i]$, $i \in \{1,\ldots,n_\eta\}$, the output set can be over-approximated by a hyper-rectangle in the expression of intervals $\mathcal{I}_{1} \times\cdots\times \mathcal{I}_{n_u}$, where $\mathcal{I}_{i}$, $i \in \{1,\ldots,n_u\}$, can be computed by 
	\begin{align}\label{cor1_1}
	\mathcal{I}_{i} = \left\{
	{\begin{array}{*{20}l}
		{[\phi(\underline{z}_i),\phi(\overline z_i)],} & {\overline z_i  \le 0}  \\
		{[\phi(\underline{z}_i),1],} & {\underline{z}_i <0,~\overline z_i >0,~\underline{z}_i+\overline z_i \le 0}  \\
		{[\phi(\overline z_i),1],} & {\underline{z}_i <0,~\overline z_i >0,~\underline{z}_i+\overline z_i > 0}  \\
		{[\phi(\overline z_i),\phi(\underline{z}_i)],} & {\underline{z}_i \ge 0}  \\
		\end{array} } \right.
	\end{align}
	where $\underline{z}_i$ and  $\overline z_i$ are the solutions of LP problems (\ref{thm2_1}) and (\ref{thm2_2}), respectively.
\end{theorem}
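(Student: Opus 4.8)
The plan is to reduce, exactly as in the proof of Theorem \ref{thm2}, to a one-dimensional analysis of the scalar preactivation $z_i = \sum_{j=1}^{n_\eta}\omega_{ij}\eta_j + \theta_i$ on the interval $[\underline z_i,\overline z_i]$, and then to replace the monotonicity argument by a case analysis adapted to the bell shape of the Gaussian $\phi(z)=e^{-z^2}$.

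First I would observe that, since $z_i$ is an affine function of $\boldsymbol{\upeta}$ and the hyper-rectangle $\mathcal{I}_{1}\times\cdots\times\mathcal{I}_{n_\eta}$ is compact and connected, the set $\{z_i \mid \boldsymbol{\upeta}\in\mathcal{H}\}$ is exactly the interval $[\underline z_i,\overline z_i]$ whose endpoints are the optima of the LPs (\ref{thm2_1}) and (\ref{thm2_2}). Hence $u_i = \phi(z_i)$ ranges over the image $\phi([\underline z_i,\overline z_i])$, and it suffices to show that this image equals the interval $\mathcal{I}_i$ prescribed by (\ref{cor1_1}); the containment $\mathcal{U}\subseteq\mathcal{I}_{1}\times\cdots\times\mathcal{I}_{n_u}$ then follows coordinate-wise.

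Next I would record the elementary properties of $\phi(z)=e^{-z^2}$: it is strictly increasing on $(-\infty,0]$, strictly decreasing on $[0,\infty)$, attains its global maximum $1$ at $z=0$, and is even, i.e. $\phi(z)=\phi(-z)$. The case analysis then proceeds on the position of $[\underline z_i,\overline z_i]$ relative to the origin. If $\overline z_i\le 0$, then $\phi$ is increasing on the interval, so the image is $[\phi(\underline z_i),\phi(\overline z_i)]$; symmetrically, if $\underline z_i\ge 0$, then $\phi$ is decreasing and the image is $[\phi(\overline z_i),\phi(\underline z_i)]$. If $\underline z_i<0<\overline z_i$, then $0$ lies in the interior of the interval, so the maximum is $\phi(0)=1$, while the minimum is attained at one of the endpoints; using evenness together with monotonicity on $[0,\infty)$, one has $\phi(\underline z_i)\le\phi(\overline z_i)$ iff $|\underline z_i|\ge|\overline z_i|$ iff $\underline z_i+\overline z_i\le 0$, which splits the straddling case into the two middle rows of (\ref{cor1_1}). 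Collecting the four cases gives $\phi([\underline z_i,\overline z_i])=\mathcal{I}_i$, completing the argument.

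There is no serious obstacle here; the only point requiring care is the straddling case, where one must correctly translate the comparison of the two endpoint values of the Gaussian into the sign condition on $\underline z_i+\overline z_i$ via the even symmetry of $\phi$. It is also worth noting that the resulting hyper-rectangle is in general a strict over-approximation rather than the exact output set, since the coordinates $z_i$ are coupled affine images of the same $\boldsymbol{\upeta}$ and need not vary independently; soundness, however, only requires the coordinate-wise containment established above.
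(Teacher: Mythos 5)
Your proof is correct and follows essentially the same route as the paper's: reduce to the scalar preactivation $z_i\in[\underline z_i,\overline z_i]$ obtained from the LPs, then case-split on the position of this interval relative to $0$ using the monotonicity of $\phi(z)=e^{-z^2}$ on $(-\infty,0]$ and $[0,\infty)$, with the straddling case handled by $\phi(0)=1$ and an endpoint comparison (the paper phrases it via $|\underline z_i|$ versus $|\overline z_i|$, which is the same as your sign condition on $\underline z_i+\overline z_i$). Your added remarks on the exact image of the affine map and on the over-approximation being coordinate-wise are fine but not a departure from the paper's argument.
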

\begin{proof}
	In the interval $(-\infty,0]$,  activation function $\phi(z) = e^{-z^2}$ is monotonically increasing and the hyper-rectangle $\mathcal{I}_i = [\phi(\underline{z}_i),\phi(\overline z_i)]$ can be obtained same as in Theorem \ref{thm1} for $z_i \in [\underline{z}_i,\overline{z}_i] \subset (-\infty,0]$ where $z_i =\sum\nolimits_{j=1}^{n_\eta}\omega_{ij}\eta_j + \theta_i$.
	
	Similarly, for $[0,+\infty)$, the activation function $\phi(z) = e^{-z^2}$ is monotonically decreasing. In this case, the monotonically decreasing property of $\phi$ directly yields $\mathcal{I}_i = [\phi(\overline z_i),\phi(\underline{z}_i)]$ by following the similar arguments of monotonically increasing case.
	
	Moreover, for the case of $z_i \in [\underline{z}_i,\overline{z}_i]$ where $\underline{z}_i <0$ and $\overline z_i >0$, the maximum value of $\phi(z_i)$ can be obtained with $z_i=0$, that is $\phi(0)=1$. Then, due to the monotonic property in intervals $(-\infty,0]$ and $[0,\infty)$, the minimum value of $\phi(z_i)$, $z_i \in [\underline{z}_i,\overline{z}_i]$ should be $\phi(\underline{z}_i)$ if $\left|\underline{z}_i \right|>\left| \overline z_i\right|$, and otherwise be $\phi(\overline{z}_i)$ if $\left|\underline{z}_i\right| \le \left| \overline z_i \right|$. The proof is complete. 
\end{proof}

The LP problems (\ref{thm2_1}) and (\ref{thm2_2}) can be efficiently solved by
existing tools such as \texttt{linprog} in Matlab. Actually, to be more efficient in output set estimation computation without evaluating the functions involved in (\ref{thm2_1}) and (\ref{thm2_2}), the solutions $\underline{z}_i$ and $\overline{z}_i$ can be explicitly written out  as
\begin{align}
\underline{z}_i & = \sum\nolimits_{j=1}^{n_\eta}\underline{g}_{ij}
\\
\overline{z}_i &= \sum\nolimits_{j=1}^{n_\eta}\overline{g}_{ij}
\end{align}
with $\underline{g}_{ij}$ and $\overline{g}_{ij}$ defined by
\begin{align}
\underline{g}_{ij}  &= \left\{ {\begin{array}{*{20}l}
	{\omega _{ij} \underline{\eta}_j +  \theta_i,} & {\omega _{ij}\geq 0}  \\
	{\omega _{ij} \overline \eta_j +  \theta_i,} & {\omega _{ij}  < 0}  \\
	\end{array} } \right.
\\
\overline g_{ij}& = \left\{ {\begin{array}{*{20}c}
	{\omega _{ij} \overline \eta_j +  \theta_i,} & {\omega _{ij}  \geq 0}  \\
	{\omega _{ij} \underline{\eta}_j +  \theta_i ,} & {\omega _{ij}  < 0}  \\
	\end{array} } \right..
\end{align}

Theorem \ref{thm1} indicates that the output set of one single layer can be over-approximated by a hyper-rectangle as long as the input set is a hyper-rectangle. Moreover, Theorems \ref{thm2} and \ref{thm3} turn the computation process into LP problems. For multi-layer neural networks in the form of (\ref{neural_network_controller}), that is $\mathbf{u}^{[L]} = \Phi(\boldsymbol{\upeta}^{[0]}) = \phi_L  \circ \phi_{L - 1}  \circ  \cdots  \circ \phi_1(\boldsymbol{\upeta}^{[0]})$, a layer-by-layer approach can be developed based on these results. For an MLP, it essentially has $\boldsymbol{\upeta}^{[\ell]}=\mathrm{u}^{[\ell-1]}$, $\ell=1,\ldots,L$. Given an input set as a union of hyper-rectangles, Theorem \ref{thm1} assures the input set of every layer, which is the output set of the preceding layer, can be over-approximated by a set of hyper-rectangles. Therefore, the computation can proceed in a layer-by-layer manner and the output set of layer $L$ becomes an over-approximation of the output set of the MLP. 
Taking Theorem \ref{thm2} for instance, function \texttt{reachMLP} given in Algorithm \ref{alg2} illustrates the layer-by-layer approach for over-approximating the output set of an MLP under Assumption \ref{assumption_1}. A similar algorithm can be developed for Theorem \ref{thm3}, which is omitted here.  

\begin{algorithm} [t!]
	\caption{Output set over-approximation for MLP with monotonic increasing activation functions} \label{alg2}
	\begin{algorithmic}[1]
		\Require Weight matrices $\mathbf{W}^{[\ell]}$, bias $\boldsymbol{\uptheta}^{[\ell]}$, $\ell \in \{ 1,\ldots,L\}$, set $\mathcal{H}$, partition numbers $M_i$, $i \in \{ 1,\ldots,n\}$
		\Ensure Output set over-approximation $\mathcal{U}_e$
		
		\Function{\texttt{reachMLP}}{$\mathbf{W}^{[\ell]}$, $\boldsymbol{\uptheta}^{[\ell]}$, $\ell \in \{ 1,\ldots,L\}$, $\mathcal{H}$, $M_i$, $i \in \{ 1,\ldots,n\}$}
		
		\State $\mathscr{P} \gets \texttt{hyrec}(\mathcal{H}, M_i, i \in \{ 1,\ldots,n\})$
		
		\For{$p=1:1:\left|\mathscr{P}\right|$}
		\State $\mathcal{I}_1^{[1]}\times\cdots\times\mathcal{I}_{n^{[1]}}^{[1]} \gets \mathcal{P}_p$
		\For{$j=1:1:L$}
		\For{$i=1:1:n^{[j]}$}
		\State $\underline{g}_{ij}  \gets \left\{ {\begin{array}{*{20}l}
			{\omega _{ij} \underline{\eta}_j +  \theta_i,} & {\omega _{ij}\geq 0}  \\
			{\omega _{ij} \overline \eta_j +  \theta_i,} & {\omega _{ij}  < 0}  \\
			\end{array} } \right.$
		\State $\overline g_{ij} \gets \left\{ {\begin{array}{*{20}c}
			{\omega _{ij} \overline \eta_j +  \theta_i,} & {\omega _{ij}  \geq 0}  \\
			{\omega _{ij} \underline{\eta}_j +  \theta_i ,} & {\omega _{ij}  < 0}  \\
			\end{array} } \right.$
		\State $\underline{z_i} \gets \sum\nolimits_{j=1}^{n_\eta}\underline{g}_{ij}$ 
		\State $\overline{z_i} \gets \sum\nolimits_{j=1}^{n_\eta}\overline{g}_{ij}$
		\State 	$\mathcal{I}_{i}^{[j+1]} \gets 	[\phi (\underline{z}_i ) ,~\phi (\overline{z}_i ) ]$		
		\EndFor		
		\EndFor
		\State $\mathcal{U}_{e,p} \gets \mathcal{I}_1^{[L]}\times\cdots\times\mathcal{I}_{n^{[L]}}^{[L]}$
		\EndFor
		\State $\mathcal{U}_e \gets \bigcup\nolimits_{p=1}^{\left|\mathscr{P}\right|}\mathcal{U}_{e,p}$
		\State \Return $\mathcal{U}_e$
		\EndFunction
	\end{algorithmic}
\end{algorithm}


\subsection{Example}\label{example1}
Let us consider an MLP with two inputs, two outputs and one hidden layer consisting of 7 neurons. The activation function for the hidden layer is  \texttt{tanh} function, and \texttt{purelin} function is for the output layer. The weight matrices and bias vectors are randomly generated as listed below:

\begin{align*}
&\mathbf{W}^{[1]}=\left[ {\begin{array}{*{20}c}
	-1.0927  & -0.9738 \\
	0.0974 &  -1.6347  \\
	-1.3900  &  1.3535  \\
	0.2311  &  3.2967  \\
	0.1067 &  -0.4837  \\
	-0.1264 &   0.3281  \\
	0.8038  &  0.5583  \\
	\end{array} } \right],~\boldsymbol{\uptheta}^{[1]}=\left[ {\begin{array}{*{20}c}
	0.7752 \\
	-0.7823 \\
	-0.5119 \\
	0.3074 \\
	-0.7417 \\
	0.7618 \\ 
	1.2038 \\
	\end{array} } \right],
\\
&\mathbf{W}^{[2]}=\left[ {\begin{array}{*{20}c}
	1.5441  &   -1.0941 \\
	1.4009  & -0.7114 \\
	-0.9595 &   0.5236 \\
	-0.4089 &  -0.7377 \\
	0.3599  & -0.7392 \\
	0.0068  &  0.1388 \\ 
	-0.2026 &   0.0655  \\
	\end{array} } \right]^{\top},~
\boldsymbol{\uptheta}^{[2]}=\left[ {\begin{array}{*{20}c}
	0.2315 \\
	-0.3555 \\
	\end{array} } \right].
\end{align*}

The input set is considered in a unit box, that is 
\begin{equation*}
\mathcal{H} = \{\boldsymbol{\upeta} \in \mathbb{R}^{2} \mid \left\|\boldsymbol{\upeta}\right\|_{\infty} \le 1\}.
\end{equation*}

First, the partition number is selected to be $M_i = 10$, which implies there are total 100 rectangular reachtubes generated for output set estimation. The estimated reachable set is illustrated in Figure \ref{fig1}. From Figure \ref{fig1}, it can be seen that the estimated output set of the proposed MLP is a union of two dimensional rectangles which is able to include all output values generated from the input set $\mathcal{H}$.

Furthermore, to show how the partition number affects the estimation result and obtain a preciser estimation for the output set, different partitioning numbers discretizing set $\mathcal{H}$ are used to apply Algorithm \ref{alg2}, they are $M_i = 10,20,30,40,50$. The estimated reachable sets are shown in Figure \ref{fig1_2}. With  finer discretizations, more computation efforts are required for running function \texttt{reachMLP}. The computation time and  number of reachtubes  with different partition numbers are listed in Table \ref{tab1}.  Comparing those results, it can be observed that larger partition numbers can lead to better estimation results at the expense of more computation efforts. 
\begin{figure}
	\begin{center}
	\includegraphics[width=12cm]{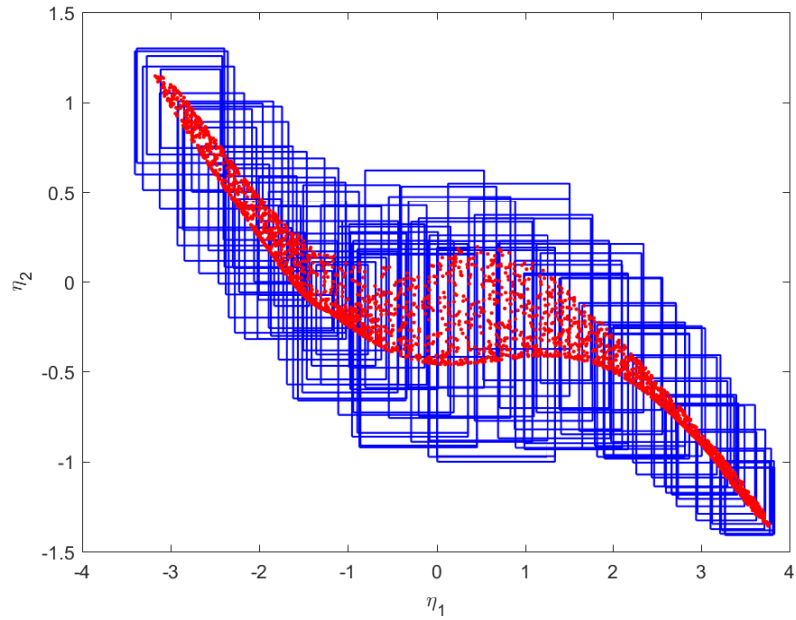}
	\caption{\boldmath Output  set estimation with $M_1=M_2=10$. Input being the unit box with partition number $M_i =10$ leads to 100 rectangular reachtubes (blue rectangles) constituting the over-approximated output set. $5000$ random outputs (red spots) are all located in the estimated output set, namely the union of the 100 reachtubes.}
	\end{center}
	\label{fig1}
\end{figure}
\begin{figure}
	\begin{center}
	\includegraphics[width=12cm]{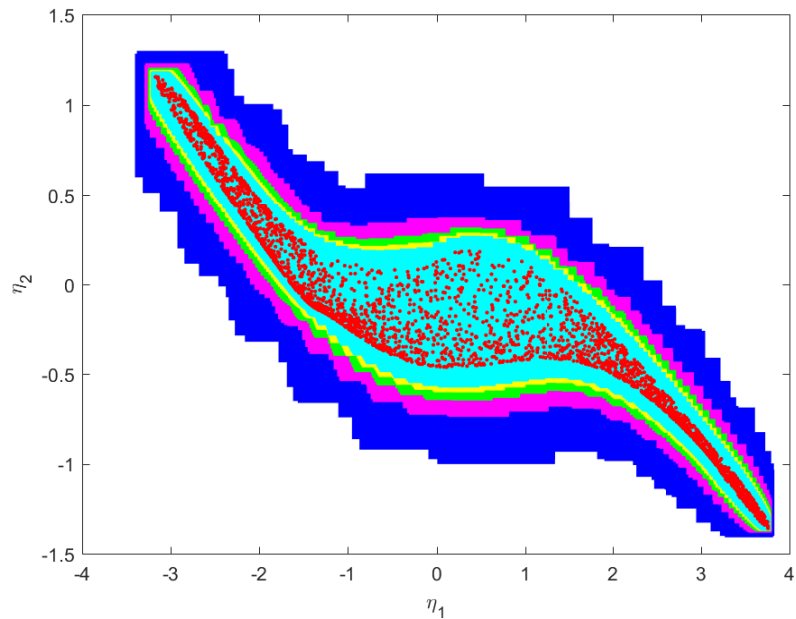}
	\caption{\boldmath Output  set estimation results with different partition number: $M_1 =M_2 =10$ (union of cyan,  yellow, green, magenta and blue areas), $M_1 =M_2 =20$ (union of cyan, yellow, green and  magenta areas), $M_1 =M_2 =30$ (union of cyan, yellow and green areas), $M_1 =M_2 =40$ (union of cyan and yellow areas) and $M_1 =M_2 =50$ (cyan area). Tighter over-approximations can be obtained with larger partition numbers, and $5000$ random outputs (red spots) are all located in all the estimated reachable sets. }
\end{center}
	\label{fig1_2}
\end{figure}
\begin{table}[t!]
	\centering
	\caption{Computation time and number of reachtubes  with different partition numbers}\label{tab1}
	\begin{tabular}{ccc}
		\hline
		Partition Number & Reachtube Number &  Computation Time \\ 
	   \hline
		$M_1=M_2=10$   & 100 & 0.062304 seconds\\
		
		$M_1=M_2=20$ & 400 &  0.074726 seconds \\
		
		$M_1=M_2=30$ & 900 &  0.142574 seconds\\
		
		$M_1=M_2=40$  & 1600 &   0.251087 seconds\\
		
		$M_1=M_2=50$  & 2500 &  0.382729 seconds\\
		\hline
	\end{tabular}
\end{table} 

\section{Reachability and Verification for Neural Network Control Systems}
This section presents the reachability analysis and safety verification algorithms for neural network control systems. The developed algorithms combine the aforementioned output set computation result for MLPs and existing reachable set estimation methods for ODE models.  
\subsection{Reachability Analysis and Safety Verification}
Based on Algorithm \ref{alg2}, which is able to obtain an over-approximation for the output set of a given MLP, the reachable set estimation for neural network control systems in the form of (\ref{closed_loop}) can be performed. The procedure generally involves two parts. Algorithm \ref{alg2} is utilized to compute an over-approximation of the output set for a neural network controller. We also have to compute the reachable set and output set of system (\ref{system}) with the input set obtained by Algorithm \ref{alg2}. For reachable set computation of systems described by ordinary difference/differential equations (ODE), there exist various approaches and tools such as those well-developed in \cite{frehse2011spaceex,chen2013flow,duggirala2015c2e2,bak2017hylaa}. In this paper, we shall not develop new methods to estimate reachable sets of ODE models since an extensive literature
is by now available for this problem on both discrete-time and continuous-time models. The following function is given to denote the reachable set estimation for ODE models
\begin{align}
\mathcal{R}_e(t+1) & = \texttt{reachODEx}(f,\mathcal{U}(t),\mathcal{R}_e(t))
\\
\mathcal{Y}_e(t) & = \texttt{reachODEy}(h,\mathcal{R}_e(t))
\end{align}
where $\mathcal{U}(t)$ is the input set, $\mathcal{R}_e(t+1)$ is the estimated reachable set for state $\mathbf{x}(t+1)$ and $\mathcal{Y}_e(t)$ is the estimated reachable set for output $\mathbf{y}(t)$. 

Recursively using  $\texttt{reachMLP}$ proposed by Algorithms \ref{alg2} together with $\texttt{reachODEx}$, $\texttt{reachODEy}$, an over-approximation of the reachable set of a closed-loop system with a neural network controller can be obtained, which is summarized by Proposition \ref{pro1} and Algorithm \ref{alg3}. 
\begin{algorithm}
	\caption{Reachable set estimation for dynamical systems with neural network controllers} \label{alg3}
	\begin{algorithmic}[1]
		\Require  System dynamics $f$ and output equation $h$, weight matrices $\mathbf{W}^{[\ell]}$, bias $\boldsymbol{\uptheta}^{[\ell]}$, $\ell = 1, \ldots,L$, initial set $\mathcal{X}_0$, input set $\mathcal{V}$, partition numbers $M_i, i \in \{ 1,\ldots,n^{[0]}\}$
		\Ensure Reachable set estimation $\mathcal{R}_e([t_0,t_f])$.
		
		\Function{\texttt{reachNNCS}}{$f$, $h$, $\mathbf{W}^{[\ell]}$, $\boldsymbol{\uptheta}^{[\ell]}$, $\ell = 1, \ldots,L$, $\mathcal{X}_0$, $\mathcal{V}$, $M_i, i \in \{ 1,\ldots,n^{[0]}\}$}
		\State $\mathcal{R}_e(t_0) \gets \mathcal{X}_0 $
		\While{$t \le t_f$}
		\State $\mathcal{Y}_e(t)\gets \texttt{reachODEy}(h,\mathcal{R}_e(t))$
		\State $\mathcal{H} \gets \mathcal{Y}_e(t) \times \mathcal{V}$
		\State $\mathcal{U}_e(t) \gets \texttt{reachMLF}(\mathbf{W}^{[\ell]}, \boldsymbol{\uptheta}^{[\ell]}, \mathcal{H},M_i)$	
		\State $\mathcal{R}_e(t+1)\gets \texttt{reachODEx}(f,\mathcal{U}_e(t),\mathcal{R}_e(t))$	
		\EndWhile
		
		\State 	$\mathcal{R}_e([t_0,t_f]) \gets \bigcup\nolimits_{t \in [t_0,t_f]}\mathcal{R}_{e}(t)$
		\State \Return $\mathcal{R}_e([t_0,t_f])$ 
		\EndFunction
	\end{algorithmic}
\end{algorithm}

\begin{proposition}\label{pro1}
	Consider neural network control system in the form of (\ref{closed_loop}) with initial set $\mathcal{X}_0$,  input set $\mathcal{V}$, the reachable set $\mathcal{R}([t_0,t_f])$ satisfies $\mathcal{R}([t_0,t_f])\subseteq \mathcal{R}_e([t_0,t_f])$, where $\mathcal{R}_e([t_0,t_f])$ is an estimated reachable set obtained by Algorithm \ref{alg3}.
\end{proposition}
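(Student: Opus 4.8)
The plan is to prove the inclusion by finite induction on the discrete time index, propagating $\mathcal{R}(t) \subseteq \mathcal{R}_e(t)$ through one pass of the \texttt{while} loop in Algorithm \ref{alg3}, and then taking the union over $t \in [t_0,t_f]$. As a standing hypothesis I would make explicit the soundness of the ODE routines — namely $h(\mathcal{R}_e(t)) \subseteq \texttt{reachODEy}(h,\mathcal{R}_e(t))$ and $\{f(\mathbf{x},\mathbf{u}) \mid \mathbf{x}\in\mathcal{R}_e(t),\ \mathbf{u}\in\mathcal{U}_e(t)\} \subseteq \texttt{reachODEx}(f,\mathcal{U}_e(t),\mathcal{R}_e(t))$ — since this is precisely the over-approximation guarantee offered by the tools \cite{frehse2011spaceex,chen2013flow,duggirala2015c2e2,bak2017hylaa} invoked by the algorithm, and Proposition \ref{pro1} can only hold relative to it.

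For the base case, the algorithm initializes $\mathcal{R}_e(t_0) \gets \mathcal{X}_0$, and by definition $\mathcal{R}(t_0) = \mathcal{X}_0$, so $\mathcal{R}(t_0) \subseteq \mathcal{R}_e(t_0)$ trivially. For the inductive step, assume $\mathcal{R}(t) \subseteq \mathcal{R}_e(t)$ for some $t$ with $t_0 \le t < t_f$ and pick an arbitrary trajectory point $\mathbf{x}(t+1;\mathbf{x}_0,\mathbf{v}(\cdot))$ with $\mathbf{x}_0\in\mathcal{X}_0$ and $\mathbf{v}(\cdot)$ taking values in $\mathcal{V}$. Then $\mathbf{x}(t)\in\mathcal{R}(t)\subseteq\mathcal{R}_e(t)$, hence $\mathbf{y}(t)=h(\mathbf{x}(t))\in h(\mathcal{R}_e(t))\subseteq\mathcal{Y}_e(t)$ by soundness of \texttt{reachODEy}; since $\mathbf{v}(t)\in\mathcal{V}$, it follows that $\boldsymbol{\upeta}(t)=[\mathbf{y}^{\top}(t)\ \mathbf{v}^{\top}(t)]^{\top}\in\mathcal{Y}_e(t)\times\mathcal{V}=\mathcal{H}$. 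By Theorem \ref{thm2} (or Theorem \ref{thm1} in the general case), the set $\mathcal{U}_e(t)$ returned by \texttt{reachMLP} satisfies $\Phi(\mathcal{H})\subseteq\mathcal{U}_e(t)$, so $\mathbf{u}(t)=\Phi(\boldsymbol{\upeta}(t))\in\mathcal{U}_e(t)$. Finally $\mathbf{x}(t+1)=f(\mathbf{x}(t),\mathbf{u}(t))$ with $\mathbf{x}(t)\in\mathcal{R}_e(t)$ and $\mathbf{u}(t)\in\mathcal{U}_e(t)$, so $\mathbf{x}(t+1)\in\texttt{reachODEx}(f,\mathcal{U}_e(t),\mathcal{R}_e(t))=\mathcal{R}_e(t+1)$. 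Since the trajectory point was arbitrary, $\mathcal{R}(t+1)\subseteq\mathcal{R}_e(t+1)$. By finite induction $\mathcal{R}(t)\subseteq\mathcal{R}_e(t)$ for every $t\in[t_0,t_f]$, and therefore $\mathcal{R}([t_0,t_f])=\bigcup_{t}\mathcal{R}(t)\subseteq\bigcup_{t}\mathcal{R}_e(t)=\mathcal{R}_e([t_0,t_f])$.

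The step I expect to be the crux is verifying that the chain of over-approximations composes without leakage — concretely, that the actual controller input $\boldsymbol{\upeta}(t)$ provably lands in the hyper-rectangular hull $\mathcal{H}=\mathcal{Y}_e(t)\times\mathcal{V}$ handed to \texttt{reachMLP}, and that the partitioning performed by \texttt{hyrec} (Algorithm \ref{alg1}) preserves coverage of $\mathcal{H}$ so that no output value is dropped when Theorems \ref{thm1}--\ref{thm2} are applied cell-by-cell and the cell images are unioned. Everything else is bookkeeping: no quantitative error estimate is required, because the claim is soundness of the over-approximation, not its tightness.
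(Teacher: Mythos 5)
Your argument is correct, but there is nothing in the paper to compare it against line by line: the paper states Proposition \ref{pro1} without any proof, treating it as an immediate consequence of the construction of Algorithm \ref{alg3} (the layer-by-layer guarantee of Theorems \ref{thm1}--\ref{thm3} via \texttt{reachMLP}, plus the assumed over-approximation property of the external routines \texttt{reachODEx} and \texttt{reachODEy} taken from the cited reachability tools). Your induction over the time index is exactly the argument the paper leaves implicit, and you execute it soundly: the base case matches the initialization $\mathcal{R}_e(t_0)=\mathcal{X}_0$, the step correctly threads an arbitrary trajectory through $h$, the controller $\Phi$, and $f$, and the final union over $t$ is routine. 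Two aspects of your write-up are genuine improvements over the paper's treatment. First, you isolate as standing hypotheses the soundness of \texttt{reachODEx} and \texttt{reachODEy}; this is the right move, since the proposition only holds relative to those guarantees, which the paper never states formally. Second, you correctly identify the crux as the containment $\Phi(\mathcal{H})\subseteq\mathcal{U}_e(t)$, which does not follow from Theorem \ref{thm2} alone (a single-layer statement) but requires the coverage property of \texttt{hyrec} in Algorithm \ref{alg1} (cells are discarded only when disjoint from $\mathcal{H}$) together with the layer-by-layer composition in Algorithm \ref{alg2}; if you were to expand the proof, that is the one point deserving an explicit sentence rather than a citation of Theorem \ref{thm1} or \ref{thm2}. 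No gap beyond that: the claim is soundness, not tightness, so no quantitative estimates are needed, as you note.
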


For continuous-time neural network control systems in the form of  (\ref{con_closed_loop}), a similar procedure can be established to perform the reachable set estimation as Algorithm \ref{alg3}. Likewise, two parts are involved for interval $[t_k,t_{k+1})$. At each beginning sampling instant $t_k$, it is the time for computing the output set of a neural network controller. Algorithm \ref{alg2} is utilized to compute an over-approximation of the output  set for the neural network controller. Then, it is noted that the output produced by the neural network controller maintains its value during the interval $[t_k,t_{k+1})$, thus the reachable set estimation for the proposed nonlinear continuous-time system can be computed by using some sophisticated methods or tools such as \cite{frehse2011spaceex,chen2013flow,duggirala2015c2e2,bak2017hylaa}, given the unchanged set of a neural network controller in $(t_k,t_{k+1})$.  

Based on the estimated reachable set obtained by Algorithm \ref{alg4} and using Lemma \ref{lemma1}, the safety property can be examined with the existence of intersections between estimated reachable set and unsafe region $\neg \mathcal{S}$.

\begin{proposition}\label{pro2}
	Consider a neural network control system in the form of (\ref{closed_loop}) with a safety specification $\mathcal{S}$, the system is safe in $[t_0,t_f]$, if $ \mathcal{R}_e([t_0,t_f]) \cap \neg \mathcal{S} = \emptyset$, where $\mathcal{R}_e([t_0,t_f]) = \texttt{reachNNCS}(\mathbf{W}^{[\ell]}$, $\boldsymbol{\uptheta}^{[\ell]}$, $\ell = 1, \ldots,L$, $\mathcal{X}_0$, $\mathcal{V}$, $M_i, i \in \{ 1,\ldots,n^{[0]}\}$, $\mathcal{S})$ is obtained by Algorithm \ref{alg4}.
\end{proposition}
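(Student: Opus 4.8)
The plan is to obtain Proposition \ref{pro2} as an immediate consequence of the soundness of the reachable set estimation produced by \texttt{reachNNCS} combined with Lemma \ref{lemma1}. First I would invoke Proposition \ref{pro1} (together with the sampled-data counterpart that underlies Algorithm \ref{alg4}) to record that the set returned by \texttt{reachNNCS} is a genuine over-approximation, i.e. $\mathcal{R}([t_0,t_f]) \subseteq \mathcal{R}_e([t_0,t_f])$. This containment is itself established step-by-step along the time horizon: at each sampling instant, \texttt{reachODEy} over-approximates the output map $h$, the Cartesian product with $\mathcal{V}$ over-approximates the admissible controller input set $\mathcal{H}$, \texttt{reachMLP} over-approximates $\Phi$ on $\mathcal{H}$ by Theorems \ref{thm1}, \ref{thm2} and \ref{thm3}, and \texttt{reachODEx} over-approximates the one-step (or one-interval) flow of $f$; composing sound over-approximations yields a sound over-approximation of the whole trajectory, and taking the union over $t \in [t_0,t_f]$ preserves this, exactly as in Definition \ref{def2}.

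Second, with the inclusion $\mathcal{R}([t_0,t_f]) \subseteq \mathcal{R}_e([t_0,t_f])$ in hand, I would apply Lemma \ref{lemma1} directly: the hypothesis $\mathcal{R}_e([t_0,t_f]) \cap \neg\mathcal{S} = \emptyset$ together with this inclusion forces $\mathcal{R}([t_0,t_f]) \cap \neg\mathcal{S} = \emptyset$, which is precisely condition (\ref{verification}) defining safety over $[t_0,t_f]$. This closes the argument.

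The care required lies in the first step rather than the second: the second step is a one-line set-theoretic inference, whereas the soundness of \texttt{reachNNCS} must be argued carefully when the ODE reachability subroutines \texttt{reachODEx} and \texttt{reachODEy} are treated as black boxes. The main obstacle I anticipate is formulating and maintaining the inductive invariant ``at every sampling instant $t_k$, $\mathcal{R}_e(t_k)$ contains every reachable state of the closed-loop system at $t_k$'', and checking that the hold-between-samples behaviour of the controller (\ref{con_neural_network_controller}) is respected when the continuous-time flow is propagated with a constant, over-approximated input set --- this is where one must assume, as the paper does, that the chosen ODE tool returns a valid over-approximation given a valid over-approximated initial set and input set. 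Once that invariant is in place, the union over the time horizon and the final intersection check follow without further work.
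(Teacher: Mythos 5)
Your proposal is correct and follows the same route the paper intends: the paper treats Proposition \ref{pro2} as an immediate consequence of the soundness of \texttt{reachNNCS} (Proposition \ref{pro1}) combined with Lemma \ref{lemma1}, which is exactly your two-step argument. Your additional discussion of the inductive invariant behind Proposition \ref{pro1} is sound elaboration but not a departure from the paper's reasoning.
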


Function \texttt{verifyNNCS} is developed based on Algorithm \ref{alg3} for Problem \ref{problem3}, the safety verification problem for neural network control systems.  If function \texttt{verifyNNCS} returns SAFE, then the neural network control system is safe. If it returns UNCERTAIN, that means the safety property is unclear for this case.
\begin{algorithm}
	\caption{Safety verification for systems with neural network controllers} \label{alg4}
	\begin{algorithmic}[1]
		\Require System dynamics $f$ and output equation $h$, weight matrices $\mathbf{W}^{[\ell]}$, bias $\boldsymbol{\uptheta}^{[\ell]}$, $\ell = 1, \ldots,L$, initial set $\mathcal{X}_0$, input set $\mathcal{V}$, partition numbers $M_i, i \in \{ 1,\ldots,n^{[0]}\}$, safety specification $\mathcal{S}$
		\Ensure SAFE or UNCERTAIN
		
		\Function{\texttt{verifyNNCS}}{$\mathbf{W}^{[\ell]}$, $\boldsymbol{\uptheta}^{[\ell]}$, $\ell = 1, \ldots,L$, $\mathcal{X}_0$, $\mathcal{V}$, $M_i, i \in \{ 1,\ldots,n^{[0]}\}$, $\mathcal{S}$}
		
		\State  $\mathcal{R}_e([t_0,t_f]) \gets \texttt{reachNNCS}(f,h,\mathbf{W}^{[\ell]},\boldsymbol{\uptheta}^{[\ell]},\mathcal{X}_0,\mathcal{V},M_i)$
		\If{$\mathcal{R}_e([t_0,t_f])  \cap \mathcal{S} = \emptyset$}
		\State \Return SAFE
		\Else 
		\State \Return UNCERTAIN
		\EndIf
		\EndFunction
	\end{algorithmic}
\end{algorithm}
\subsection{Example}
Consider a discrete-time linear system 
\begin{align*} 
\left\{ {\begin{array}{*{20}l}
	\mathbf{x}(t+1) = \mathbf{A}\mathbf{x}(t)+\mathbf{B u }(t)\\
	\mathbf{y}(t) = \mathbf{Cx}(t)
	\end{array} } \right.
\end{align*}
where $\mathbf{x}(t) \in \mathbb{R}^{2}$, $\mathbf{u}(t) \in \mathbb{R}$ and $\mathbf{y}(t) \in \mathbb{R}$. System matrices $\mathbf{A} \in \mathbb{R}^{2 \times 2}$, $\mathbf{B} \in \mathbb{R}^{2 \times 1}$ and $\mathbf{C} \in \mathbb{R}^{1 \times 2}$ are randomly chosen as 
\begin{align*}
\mathbf{A}=\left[ {\begin{array}{*{20}c}
	-0.6722  &  0.0935 \\
	-0.4011  &  0.4969  \\
	\end{array} } \right],~
\mathbf{B}=\left[ {\begin{array}{*{20}c}
	0.4805 \\
	-0.3911  \\
	\end{array} } \right],~
\mathbf{C}=\left[ {\begin{array}{*{20}c}
	-0.4625  \\  1.4874 \\
	\end{array} } \right].
\end{align*}

The neural network feedback law is $\mathbf{u }(t) = \Phi(\mathbf{y}(t),\mathbf{v}(t))$ with $\mathbf{v} \in \mathcal{V} = \{\mathbf{v} \in \mathbb{R}\mid \left\|\mathbf{v}\right\|_\infty \le 0.5\}$. The neural network is chosen with $1$ hidden layer consisting of $5$ neurons, and the weight matrices and bias vectors are also randomly generated as
\begin{align*}
&\mathbf{W}^{[1]}=\left[ {\begin{array}{*{20}c}
	0.8530  & -1.0127 \\
	-1.1751 &  -1.2403  \\
	-0.4544 &   0.2666 \\
	-0.5061 &  -1.2078 \\
	1.8037  & -1.0501 \\
	\end{array} } \right],~\boldsymbol{\uptheta}^{[1]}=\left[ {\begin{array}{*{20}c}
	0.7996 \\
	1.6286 \\
	0.1291 \\
	-2.0848 \\
	-0.6471\\
	\end{array} } \right],
\\
&\mathbf{W}^{[2]}=\left[ {\begin{array}{*{20}c}
	0.4687  \\  0.2829 \\   1.3412 \\   0.3806  \\  1.4354
	\end{array} } \right]^{\top},~
\boldsymbol{\uptheta}^{[2]}=
-0.2517 .
\end{align*}

Assume the initial state set as 
\begin{align*}
\mathcal{X}_0 = \{\mathbf{x} \in \mathbb{R}^2\mid \left\|\mathbf{x}-\mathbf{x}_{c,1}\right\|_\infty \le 0.5\}
\end{align*} 
where $x_{c,1} = [2.5~~2.5]^{\top}$. To estimate the reachable set of system state $\mathbf{x}(t)$, $t \in [0,10]$, we execute Algorithm \ref{alg3} in which function $\texttt{reachMLF}$ is described by Algorithm \ref{alg2}. For the reachable set $\mathcal{R}_e(t)$ at each step $t$, we use the convex hull of all reachable sets produced by the reachtubes of the neural network controller, denoted by $\tilde{\mathcal{R}}_e(t)$ such that $\mathcal{R}_e(t) \subseteq \tilde{\mathcal{R}}_e(t)$. As a result, functions $\texttt{reachODEx}$ and $\texttt{reachODEy}$ can  be expressed as
\begin{align*}
\texttt{reachODEx}(f,\mathcal{U}(t),\tilde{\mathcal{R}}_e(t)) &= \mathbf{A} \tilde{\mathcal{R}}_e(t) + \mathbf{B}\mathcal{U}(t),\\
\texttt{reachODEy}(h,\tilde{\mathcal{R}}_e(t)) &= \mathbf{C} \tilde{\mathcal{R}}_e(t)
\end{align*} 
which can be efficiently computed in the employment of sophisticated computational convex geometry tools such as MPT3 \cite{MPT3}. Since  $\mathcal{R}_e(t) \subseteq \tilde{\mathcal{R}}_e(t)$, one can  obtain
\begin{align*}
\mathcal{R}_e(t+1)&\subseteq\texttt{reachODEx}(f,\mathcal{U}(t),\tilde{\mathcal{R}}_e(t)),
\\
\mathcal{Y}_e(t)&\subseteq\texttt{reachODEy}(h,\tilde{\mathcal{R}}_e(t)).
\end{align*} 

The estimation results are shown in Figures \ref{fig3} and \ref{fig4}. Similar as in Example \ref{example1}, we use different partition numbers to observe the differences in the estimation results. Two partition numbers $M_i = 5,20$ are used, and obviously a larger partition number leads to a tighter estimation result which is consistent with the observation in Example \ref{example1}. 

Furthermore, with the estimated reachable set, the safety property can be easily verified by inspecting  the figures for non-empty intersections between the over-approximation of the reachable set and unsafe regions.  For instance, assuming the unsafe region as 
\begin{equation*}
\neg \mathcal{S} = \{\mathbf{x}\in \mathbb{R}^2 \mid \left\|\mathbf{x}-\mathbf{x}_{c,2}\right\|_\infty \le 0.5\}
\end{equation*}
where $\mathbf{x}_{c,2} = [-2,5~~2.5]^{\top}$, the safety of the system can be claimed for interval $[0,10]$ since there exists no intersection of the estimated reachable set and the unsafe region as shown in Figure \ref{fig2}.

\begin{figure}
	\begin{center}
		\includegraphics[width=12cm]{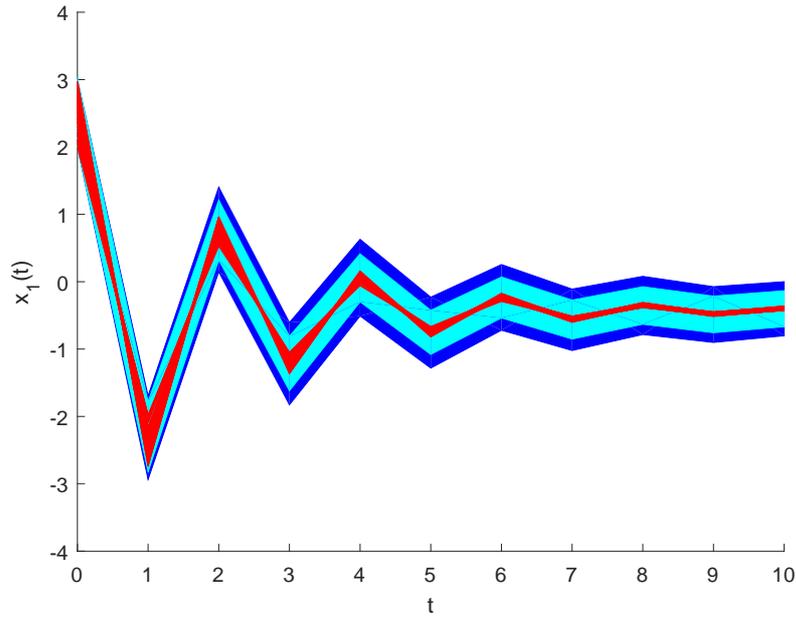}
		\caption{The reachable set estimations of state $\mathbf{x}_1(t)$ for the proposed system in time interval $[0,10]$ with partition numbers $M_i=5,20$. Blue area is for $M_i=5$ and cyan area is for $M_i=20$. The estimation result with $M_i=20$ is tighter than that by $M_i=5$.  $1000$ random state trajectories (red lines) are all within the estimated reachable sets.  }
		\label{fig3}
	\end{center}
\end{figure}

\begin{figure}
	\begin{center}
		\includegraphics[width=12cm]{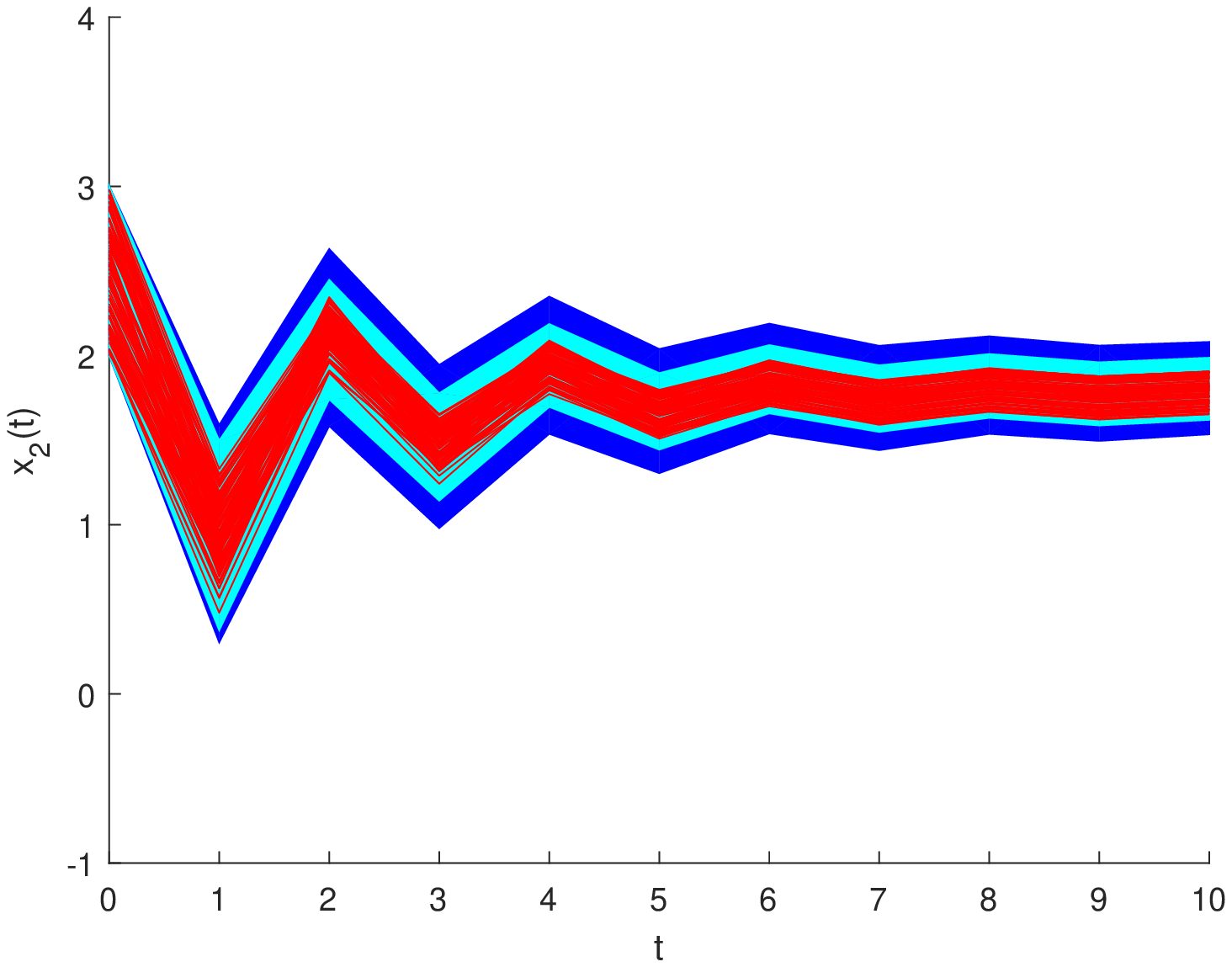}
		\caption{The reachable set estimations of state $\mathbf{x}_2(t)$ for the proposed system in time interval $[0,10]$ with partition numbers $M_i=5,20$. Blue area is for $M_i=5$ and cyan area is for $M_i=20$. The estimation result with $M_i=20$ is tighter than that by $M_i=5$.  $1000$ random state trajectories (red lines) are all within the estimated reachable sets. }
		\label{fig4}
	\end{center}
\end{figure}

\begin{figure}
	\begin{center}
		\includegraphics[width=12cm]{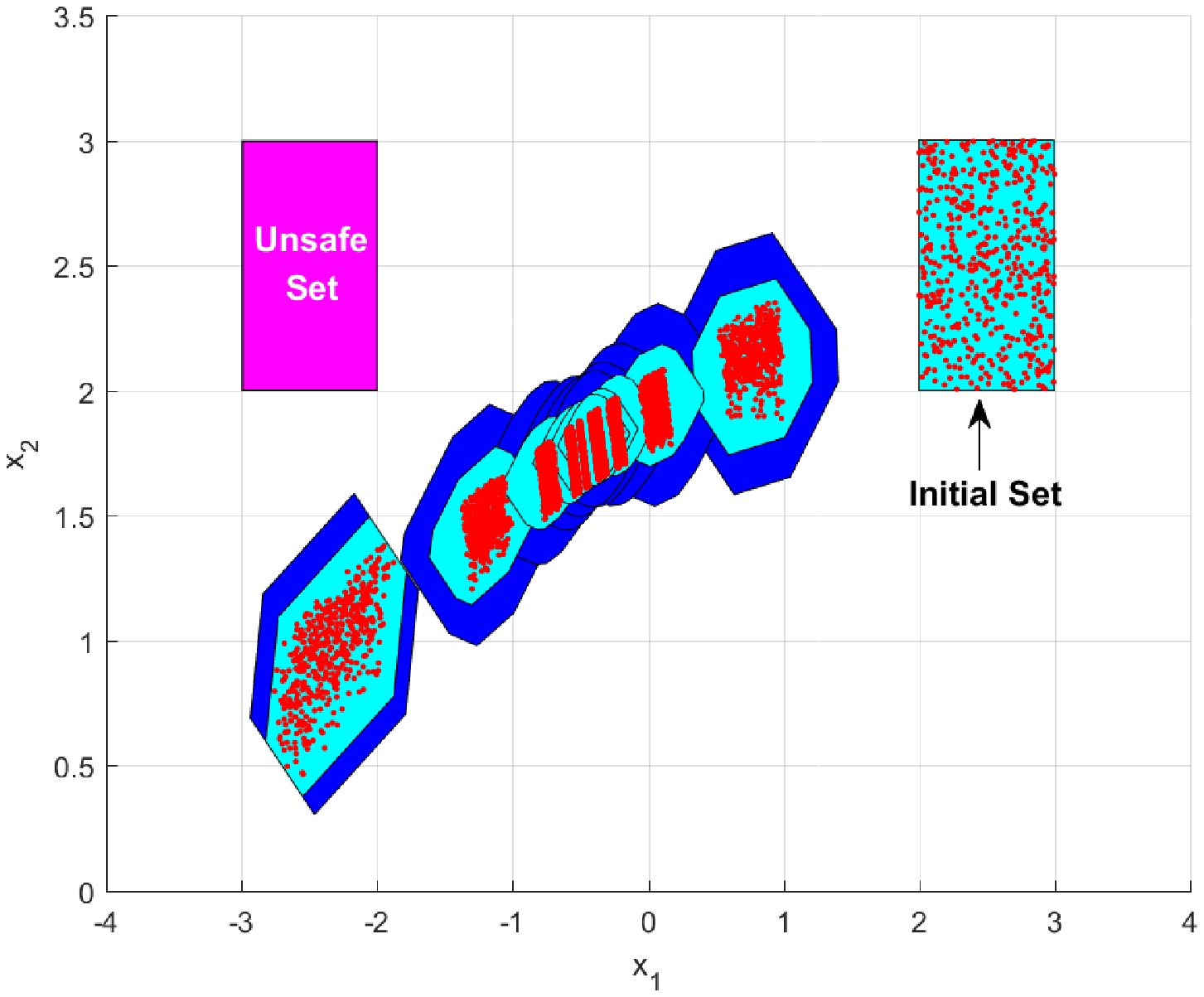}
		\caption{\boldmath The reachable set estimations of state $\mathbf{x}(t)$ for the proposed system in time interval $[0,10]$ with partition numbers $M_i=5,20$. Blue area is for $M_i=5$ and cyan area is for $M_i=20$. The estimation result with $M_i=20$ is tighter than that by $M_i=5$.  $1000$ random state trajectories (red spots) are all within the estimated reachable sets. There exists no intersection between the estimated reachable set and the proposed unsafe region, so the system can be claimed to be safe in interval $[0,10]$. } \label{fig2}
	\end{center}
\end{figure}

\section{Conclusion}
The problems of reachable set estimation and safety verification for a class of dynamical systems equipped with neural network controllers have been addressed in this paper. In most existing work, neural network components contained in closed-loop systems are considered as black-boxes lacking effective methods to predict all behaviors. This paper presents a novel approach to estimate the output reachable set of a feedforward neural networks known as multi-layer perceptrons (MLPs). By discretizing the input set into a number of hyper-rectangular cells, the output reachable set can be efficiently over-approximated via solving an LP problem. The proposed approach is not restricted to specific forms of activation functions of neurons. Combining the estimated output set of these neural network controllers and reachable set estimation methods for ODE models of plants, we develop algorithms to over-approximate the reachable set and verify safety property of closed-loop control systems with neural network controllers that may be implemented in embedded software. Beyond the initial results described in this paper, in the future we plan to investigate other modeling and reachability analysis approaches for the plant and neural network controllers, such as some of the state-of-the-art hybrid systems verification tools for consider the continuous-time plant reachability, like SpaceEx~\cite{frehse2011spaceex}, Flow*~\cite{chen2016rtss,chen2013flow}, Hyst~\cite{bak2015hscc}, Hylaa~\cite{bak2017hylaa}, C2E2~\cite{duggirala2015c2e2,fan2016emsoft,fan2016cav}, and others, as well as broader classes of neural networks.
\bibliographystyle{ieeetr}
\bibliography{ref}

\end{document}